\journalname{}
\def\R{{\mathbb R}}
\def\bd{{\partial}}
\def\dd{{d}}
\def\D{{\cal D}}
\def\E{{\cal E}}
\def\DD{\D_{B,A}}
\DeclareMathOperator{\VD}{{\cal V}}
\DeclareMathOperator{\h}{{H}}
\newcommand{\scalprod}[2]{\left\langle #1,#2 \right\rangle}
\newcommand{\N}{\mathbb N}
\begin{document}
\title{Partial-Matching and Hausdorff RMS Distance Under Translation: Combinatorics and Algorithms}
\titlerunning{Partial-Matching and Hausdorff RMS Distance}  

\author{Rinat Ben-Avraham\and Matthias Henze \and Rafel Jaume \and Bal\'azs Keszegh \and Orit~E.~Raz\and Micha Sharir \and Igor Tubis}

\authorrunning{Rinat Ben-Avraham et al.} 
\institute{Rinat Ben-Avraham, Orit E.~Raz, Micha Sharir, Igor Tubis \at
              Blavatnik School of Computer Science, Tel Aviv University, Tel~Aviv, Israel \\              
              \email{rinatba@gmail.com,\{oritraz,michas\}@post.tau.ac.il, mrtubis@gmail.com}
							\and
						Matthias Henze, Rafel Jaume\at
							Institut f\"ur Informatik, Freie Universit\"at Berlin, Berlin, Germany\\
							\email{matthias.henze@fu-berlin.de, jaume@mi.fu-berlin.de}
							\and
						Bal\'azs Keszegh\at
							Alfr\'ed R\'enyi Institute of Mathematics, Hungarian Academy of Sciences, Budapest, Hungary\\
							\email{keszegh@renyi.hu}
}

\date{\today}

\maketitle              

\begin{abstract}
We consider the RMS distance (sum of squared distances between pairs of points) under translation between two point sets in the plane, in two different setups.
In the partial-matching setup, each point in the smaller set is matched to a distinct point in the bigger set. 
Although the problem is not known to be polynomial, we establish several structural properties of the underlying subdivision of the plane and derive improved bounds on its complexity.
These results lead to the best known algorithm for finding a translation for which the partial-matching RMS distance between the point sets is minimized. 
In addition, we show how to compute a local minimum of the partial-matching RMS distance under translation, in polynomial time.
In the Hausdorff setup, each point is paired to its nearest neighbor in the other set. 
We develop algorithms for finding a local minimum of the Hausdorff RMS distance in nearly linear time on the line, 
and in nearly quadratic time in the plane. These improve substantially 
the worst-case behavior of the popular ICP heuristics for solving this problem.

\keywords{Shape matching, partial matching, Hausdorff RMS distance, convex subdivision, local minimum}
\end{abstract}

\section{Introduction}

Let $A$ and $B$ be two finite sets of points in the plane, of respective cardinalities $n$ and $m$.
We are interested in measuring the similarity between $A$ and~$B$, under a suitable proximity measure. 
We consider two such measures where the proximity is the sum of the squared distances between pairs of points. 
We refer to the measured distance between the sets, in both versions, as the \emph{RMS (Root-Mean-Square) distance}.
In the first, we assume that $n \ge m$ and we want to match all the points of $B$ (a specific pattern that we want to identify), in a one-to-one manner, to a subset of $A$ (a larger picture that ``hides'' the pattern) of size $|B|$. 
This measure is called the \emph{partial-matching RMS distance}. 
This is motivated by situations where we want a one-to-one matching between features of $A$ and features of $B$~\cite{JL,leastSquares,ZikanSilberberg}.
In the second, each point is assigned to its nearest neighbor in the other set, and we call
it the \emph{Hausdorff RMS distance}. 
See~\cite{balls} for a similar generalization of the Hausdorff distance.
In both variants the sets $A$ and $B$ are in general not aligned, so we seek a translation of 
one of them that will minimize the appropriate RMS distance, partial matching or Hausdorff. 

\subsection{\bf The partial-matching RMS distance problem}
Let $A=\{a_1,\ldots,a_n\}$ and $B=\{b_1,\ldots,b_m\}$ be two sets of points in the plane.
Here we assume that $m \le n$, and we seek a minimum-weight \emph{maximum-cardinality matching} of 
$B$ into $A$. 
This is a subset $M$ of edges of the complete bipartite graph with edge set
$B\times A$, so that each $b\in B$ appears in exactly one edge of $M$, and each $a\in A$ 
appears in at most one edge. 
The weight of an edge $(b,a)$ is $\|b-a\|^2$, the squared Euclidean distance between $b$ and $a$, and the weight 
of a matching is the sum of the weights of its edges.

A maximum-cardinality matching can be identified as an injective assignment $\pi$ of $B$ into $A$. 
With a slight abuse of notation, we denote by $a_{\pi(i)}$ the point 
$a_j$ that $\pi$ assigns to $b_i$. In this notation, the minimum RMS partial-matching 
problem (for fixed locations of the sets) is to compute
\[
M(B,A) = \min_{\pi:B\rightarrow A\textrm{ injective}} 
\sum_{i=1}^m \left\| b_i - a_{\pi(i)}\right\|^{2}.
\]
Allowing the pattern $B$ to be translated, we obtain the problem of computing the minimum partial-matching RMS distance under translation, 
defined as 
\[
M_T(B,A) = \min_{t\in \R^2}M(B+t,A) =\min_{\substack{t\in \R^2,\pi:B\rightarrow A,\\ \pi\textrm{ injective}}} 
\sum_{i=1}^m \left\|b_i + t- a_{\pi(i)}\right\|^{2}.
\] 
Here $a_{\pi(i)}$ is the point of $A$ assigned to $b_i+t$; in this notation we hide the explicit dependence on $t$, and assume it to be understood from the context.

The function $F(t):=M(B+t,A)$ 
induces a subdivision of $\R^2$, where two points $t_1,t_2\in\R^2$ are in the same region if the minimum of $F$ at $t_1$ and at $t_2$ are attained by the same set of
assignments. 
We refer to this subdivision, following Rote~\cite{roteNote1}, 
as the \emph{partial-matching subdivision} and denote it by $\D_{B,A}$. We say that a matching is \emph{optimal} if it attains $F(t)$ for some $t \in \R^2$.

\subsection{\bf  The Hausdorff RMS distance problem}
Let $N_A(x)$ (resp., $N_B(x)$) denote a nearest neighbor in $A$ (resp., in $B$) of a point 
$x\in{\R^2}$ breaking ties arbitrarily. The \emph{unidirectional (Hausdorff) RMS distance} between $B$ and $A$ is defined as
\begin{linenomath}
\[
\h(B,A) = \sum_{b\in B} \|b-N_A(b)\|^{2}.
\]
\end{linenomath}
We also consider \emph{bidirectional} RMS distances, in which we also measure distances 
from the points of $A$ to their nearest neighbors in $B$. We consider two variants of this 
notion. The first variant is the \emph{$L_1$-bidirectional RMS distance} between $A$ and $B$, 
which is defined as 
\begin{linenomath}
\[
\h_1(B,A) = \h(A,B) + \h(B,A).
\]
\end{linenomath}
The second variant is the \emph{$L_\infty$-bidirectional RMS distance} between $A$ and $B$, and is defined as
\begin{linenomath}
\[
\h_\infty(B,A) = \max{\left\{ \h(A,B), \h(B,A) \right\}}.
\]
\end{linenomath}
Allowing one of the sets (say, $B$) to be translated, we define the \emph{minimum unidirectional 
RMS distance under translation} to be
\begin{linenomath}
\[
\h_T(B,A) = \min_{t\in \R^2} \h(B+t,A) =\min_{t\in \R^2} \sum_{b\in B} \|b + t - N_A(b + t)\|^{2} ,
\]
\end{linenomath}
where $B+t=\{b_1+t,\ldots,b_m+t\}$.
Similarly, we define the \emph{minimum $L_1$- and $L_\infty$-bidirectional RMS distances 
under translation} to be 
\begin{linenomath}
\begin{align*}
\h_{T,1}(B,A) &= \min_{t\in \R^2}\h_1(B+t,A)\qquad\textrm{and}\\
\h_{T,\infty}(B,A) &= \min_{t\in \R^2}\h_\infty(B+t,A). 
\end{align*}
\end{linenomath}

\subsection{\bf Background}
A thorough initial study of the minimum RMS partial-matching distance under translation is given by Rote~\cite{roteNote1};
see also \cite{rotePaper,roteNote2} for two follow-up studies, another study in \cite{AgarwalPhillips},
and an abstract of an earlier version of parts of this paper~\cite{HJK}. 
The resulting subdivision $\D_{B,A}$, as defined above,
is shown in \cite{roteNote1} to be a subdivision whose faces are convex polygons. 
Rote's main contribution for the analysis of the complexity of $\D_{B,A}$ was to show
that a line crosses only $O(nm)$ regions of the subdivision (see Theorem~\ref{thm:Rote} below). 
However, obtaining sharp bounds for the complexity of $\D_{B,A}$, in particular, settling whether this complexity is polynomial or not, is 
still an open issue.	

The problem of Hausdorff RMS minimization under translation has been considered in 
the literature (see, e.g., \cite{balls} and the references therein), although only scarcely so (compared to the extensive body of work on the standard \mbox{$L_\infty$-Hausdorff} distance). 
If $A$ and $B$ are sets of points on the line, the complexity of the Hausdorff RMS function, 
as a function of the translation $t$, is $O(nm)$ (and this bound is tight in the worst case). Moreover,
the function can have many local minima (up to $\Theta(nm)$ in the worst case). Hence,
finding a translation that minimizes the Hausdorff RMS distance can be done in 
brute force, in $O(nm\log (nm))$ time, but a worst-case near linear algorithm is not known. 
In practice, though, there exists a popular heuristic technique, called the ICP 
(Iterated Closest Pairs) algorithm, proposed by Besl and McKay~\cite{ICP92}
and analyzed in Ezra et al.~\cite{ICP2006}.
Although the algorithm is reported to be efficient in practice, it might perform $\Theta(nm)$ 
iterations in the worst case. Moreover, each iteration takes close to linear time (to find 
the nearest neighbors in the present location).

The situation is even worse in the plane, where the complexity of the Hausdorff 
RMS function is $O(n^2m^2)$, a bound which is worst-case tight, and the bounds
for the performance of the ICP algorithm, are similarly worse.
Similar degradation shows up in higher dimensions too; see, e.g., \cite{ICP2006}.

\subsection{\bf  Our results}
In this paper we study these two fairly different variants of the problem of minimizing
the RMS distance under translation, and improve the state of the art in both of them.

In the partial-matching variant, we first analyze
the complexity of the subdivision $\D_{B,A}$. 
We significantly improve the bound from the naive $O(n^m)$ to
$O(n^2m^{3.5} (e \ln m+e)^m)$. 
In particular, the complexity is only quadratic in the size of the larger set, albeit still slightly superexponential in the size of the smaller set. 
In addition, we provide the best known lower bound $\Omega(m^2(n-m)^2 )$ for the complexity of~$\D_{B,A}$. 
Albeit being only polynomial in $m$, this bound is tight with respect to $n$.

A preliminary informal exposition of this analysis by a subset of the authors
is given in the (non-archival) note \cite{HJK}.
The present paper expands the previous note,
derives additional interesting structural properties of the subdivision, and significantly
improves the complexity bound. 
The arguments that establish the bound can be generalized to bound the number of regions (full-dimensional cells) of the 
analogous subdivision in $\R^d$ by\footnote{Note that in two dimensions the first factor improves to $(nm^2)^d=(nm^2)^2$. See below for details.} $O\left( (n^{2}m)^d (e \ln m+e)^m)/ \sqrt{m} \right)$. 
The derivation of the upper bound proceeds by a reduction that connects partial matchings 
to a combinatorial question based on a game-theoretical problem, which we believe to be
of independent interest.

Next we present a polynomial-time algorithm, that runs in $O(n^3m^6\log{n})$ time, for finding a local minimum of the
partial-matching RMS distance under translation. 
This is significant, given that we do not have a polynomial
bound on the size of the subdivision. 
We also fill in the details of explicitly computing the intersections of a line with the edges and faces of $\DD$. 
Rote hinted at such an algorithm in \cite{roteNote1} but, exploiting some new properties of $\DD$ derived here, we manage to compute the intersections in a simple, more efficient manner.

We also note that by combining the combinatorial bound for the complexity of $\DD$, along with the procedures in the algorithm for finding a local minimum of the partial-matching RMS distance, it is possible to traverse all of $\DD$, and compute a \emph{global} minimum of the partial-matching RMS distance in time $O(n^3 m^{7.5} (e \ln{m}+e)^m)$. This is the best known bound for this problem. 

For the Hausdorff variant, we provide improved algorithms for computing a local minimum
of the RMS function, in one and two dimensions. Assuming $|A|=|B|=n$, in the one-dimensional 
case the algorithms run in time $O(n\log^2n)$, and in the two-dimensional case they run in time $O(n^2\log^2 n)$. 
Our approach thus beats the worst-case running time of the ICP algorithm (used for about two decades to solve this problem). 
The approach is an efficient search through the (large number of) critical values of the RMS function. 
The techniques are reasonably standard, although their assembly is somewhat involved. 
This part of the work was partially presented in~\cite{EinGedi}.

Note that in both the partial matching and the Hausdorff variants, our algorithms 
primarily compute a {\em local} minimum of the corresponding objective function. It 
would naturally be more desirable to find the
{\em global} minimum, but we do not know how to achieve this without an exhaustive 
search through all possible combinatorially different translations (in both cases), which would make the 
algorithms considerably less efficient, especially in the partial-matching setting, as noted 
above. Concerning the effectiveness of computing a local minimum, we make the 
following comments.

In the Hausdorff variant, the ICP algorithm, which we want to improve, also constructs 
only a local minimum. In this regard, the developers of ICP and other users of the technique 
have proposed several heuristic enhancements (for a collection of these see \cite{ICP92,fasticp}). 
In particular, starting the algorithm at a translation that is sufficiently close to the global 
minimum is likely to converge to that minimum. In real life applications, usually this can be done by human assistance. Avoiding such manual input, an alternative exploitation of this idea is to start the algorithm from several scattered initial placements, and hope that one of these incarnations will lead to the global minimum.

These, and other similar heuristics can be applied, with suitable modifications, to our algorithms too. 
See a remark to that effect, later on in Section~\ref{subsec:pmalg}. 

In particular, if the points of $A$ and of $B$ are in general, sufficiently separated positions, one would expect the local minima to be sparse and sufficiently separated, so the strategy of using a reasonable number of starting ``seeds'' that are well separated and cover the relevant portion of the translation space, has good chances of hitting the global minimum. Alternatively, it could be the case that the pattern $B$ appears several times in $A$, so that each such appearance has its own local minimum. In such cases it usually does not matter which local minimum we hit, assuming that they are all more or less of the same matching quality.

Other heuristics have also been proposed for the ICP algorithm. For example,
one could sample smaller subsets of the original point sets, run an inefficient algorithm for finding the global minimum for the samples, and use that as a starting translation. Heuristics of this kind can also be applied to our algorithms.

\section{Properties of \texorpdfstring{$\D_{B,A}$}{D\_\{B,A\}}}
\label{properties}

We begin by reconstructing several basic properties of $\D_{B,A}$ that have been noted by Rote in~\cite{roteNote1}.
First, if we fix the translation $t \in{\R^2}$ and the assignment~$\pi$, the cost of the matching, denoted by $f(\pi,t)$, is
\begin{linenomath}
\begin{align}
 f(\pi,t) &=  \sum_{i=1}^m \left\| b_i + t - a_{\pi(i)}\right\|^{2} =  c_\pi + \scalprod {t}{\dd_\pi } + m \left\| t\right\|^{2} \label{eq:Rote},
\end{align}
\end{linenomath}
where $c_\pi=\sum_{i=1}^m \left\| b_i - a_{\pi(i)}\right\|^{2}$ and $\dd_\pi=2\sum_{i=1}^m  (b_i - a_{\pi(i)})$. 
For $t$ fixed, the assignments that minimize $f(\pi,t)$ are the same assignments 
that minimize $(\pi,t)\mapsto c_\pi+ \scalprod{ t}{ \dd_\pi }$. 
It follows 
that $\D_{B,A}$ is the minimization diagram (the $xy$-projection) of the graph of the function
\[
\E_{B,A}(t) = \min_{\pi:B\to A\textrm{ injective}} \left (c_\pi + \scalprod{t}{\dd_\pi} \right),\quad t\in\R^2.
\]
This is the lower envelope of a finite number of planes, so its graph is a convex polyhedron, 
and its projection $\D_{B,A}$ is a convex subdivision of the plane, whose faces are convex polygons.
In particular, it follows that an assignment $\pi$ can be associated with at most one open region 
of the subdivision $\D_{B,A}$.

The great open question regarding minimum partial-matching RMS distance under translation, 
is whether the number of regions of $\D_{B,A}$ is polynomial in~$m$ and~$n$. 
A significant, albeit small step towards settling this question is the 
following result of Rote~\cite{roteNote1}.
\begin{theorem}[Rote \cite{roteNote1}]\label{thm:Rote}
A line intersects the interior of at most $m(n-m)+1$ different regions of the 
partial-matching subdivision $\D_{B,A}$.
\end{theorem}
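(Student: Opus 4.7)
\smallskip

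\noindent\textbf{Proof plan.} My plan is to restrict the cost function to $\ell$, reduce the count to that of the faces of a one-dimensional lower envelope of lines, observe that the slope of each line depends only on the image $\pi(B)\subseteq A$, and close with a rank-sum monovariant that loses at least one at each breakpoint and lives in an interval of length $m(n-m)$. The principal step is to show that, in general position, consecutive optimal images differ by a single element.

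First, parametrize $\ell$ as $t(s)=t_0+sv$. By \eqref{eq:Rote},
\begin{linenomath}
\[
f\bigl(\pi,t(s)\bigr)=\bigl(c_\pi+\scalprod{t_0}{\dd_\pi}\bigr)+s\,\scalprod{v}{\dd_\pi}+m\|t(s)\|^{2},
\]
\end{linenomath}
and the quadratic term is independent of $\pi$. Hence the regions of $\DD$ crossed by $\ell$ are in bijection with the maximal linear pieces of the lower envelope $\mathcal{U}(s):=\min_\pi L_\pi(s)$, where $L_\pi(s):=(c_\pi+\scalprod{t_0}{\dd_\pi})+s\,\scalprod{v}{\dd_\pi}$. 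Since $\dd_\pi=2\sum_i b_i-2\sum_{a\in\pi(B)}a$, the slope $\scalprod{v}{\dd_\pi}$ depends on $\pi$ only through $S:=\pi(B)$, so assignments with the same image produce parallel lines and only the one of minimum intercept (call it $\bar L_S$) can appear on $\mathcal{U}$. Thus $\mathcal{U}$ is the lower envelope of at most $\binom{n}{m}$ lines $\bar L_S$, and every envelope breakpoint corresponds to a change of the optimal image $S$.

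Next I would perturb $A$, $B$, and the direction $v$ slightly---an operation that can only increase the number of regions crossed---so that the values $w_j:=\scalprod{v}{a_j}$ are pairwise distinct and no three of the lines $\bar L_S$ are concurrent. At any breakpoint $s^*$ let the two optimal assignments be $\pi,\pi'$, viewed as matchings $M,M'$. Decompose the symmetric difference $M\oplus M'$ into alternating paths and cycles $C_1,\ldots,C_r$; for any $T\subseteq\{C_1,\ldots,C_r\}$ the partial swap $M_T$ is a valid matching whose cost at $s^*$ differs from $f(\pi,t(s^*))$ by $\sum_{C\in T}\Delta_C$, where $\Delta_C$ is the per-component cost change. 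Optimality of $M$ forbids any $\Delta_C<0$ and optimality of $M'$ forbids any $\Delta_C>0$, so every $\Delta_C=0$ and every partial swap is itself optimal. General position forces exactly two optimal $\pi$'s at $s^*$, hence $r=1$. A single alternating cycle would give $\pi(B)=\pi'(B)$ and the same slope, yielding no breakpoint; hence the unique component is an alternating path, and its two endpoints are precisely the element removed from $\pi(B)$ and the one added to it.

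Finally, let $r_j$ denote the rank of $a_j$ in the increasing order of the $w_j$'s, and set $\Phi(S):=\sum_{a_j\in S}r_j$. Along $\mathcal{U}$ the slopes strictly increase with $s$, so $\sum_{a_j\in S}w_j$ strictly decreases at each breakpoint; combined with the previous step, the entering $a_{j'}$ and leaving $a_j$ satisfy $w_{j'}<w_j$, whence $r_{j'}<r_j$ and $\Phi$ drops by at least one. Since $\Phi$ takes values in $\bigl[\tfrac{m(m+1)}{2},\tfrac{m(2n-m+1)}{2}\bigr]$, there are at most $m(n-m)$ breakpoints and so at most $m(n-m)+1$ regions of $\DD$ are crossed by $\ell$. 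I expect the main obstacle to be the single-swap claim: translating the two-optima-at-a-breakpoint condition into the combinatorial statement $|\pi(B)\triangle\pi'(B)|=2$ through the alternating-path decomposition and a careful handling of general position; once this is in place the rank-sum monovariant delivers the tight $m(n-m)$ bound immediately.
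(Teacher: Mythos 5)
Your proposal follows essentially the same route as Rote's proof (whose key ingredient the paper restates in the proof of Lemma~\ref{lem:structure}(\ref{lem6_convexPath})): restrict to the line, reduce to a one-dimensional envelope of lines whose slopes depend only on the matched image $S=\pi(B)$, show via the alternating path/cycle decomposition that consecutive optimal images differ by a single swap, and bound the number of breakpoints by the rank-sum potential $\Phi$, which moves monotonically by at least one per breakpoint inside an interval of length $m(n-m)$. The argument is sound, with one sign slip you should fix: $\mathcal{U}(s)=\min_\pi L_\pi(s)$ is a \emph{concave} piecewise-linear function, so its slopes \emph{decrease} as $s$ increases; hence $\sum_{a_j\in S}w_j$ strictly \emph{increases} at each breakpoint, the entering point satisfies $w_{j'}>w_j$, and $\Phi$ \emph{increases} by at least one (exactly as the paper states: ``whenever the set matched by the optimal assignments changes, $\Phi$ must increase''). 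Since the range of $\Phi$ has length $m(n-m)$ either way, the final count is unaffected. Two smaller points: your conclusion ``$r=1$'' is slightly too strong --- general position on the lines $\bar L_S$ only forces at most two optimal \emph{images} at a breakpoint, so the symmetric difference could still contain cycles in addition to exactly one path; but cycles do not change the image, so the single-swap conclusion (and hence the unit change in $\Phi$) survives. Finally, the claim that perturbing $A$, $B$, $v$ ``can only increase the number of regions crossed'' is asserted rather than argued; it is the standard limiting step, but it deserves a sentence.
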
  

\begin{figure}
\centering
\includegraphics{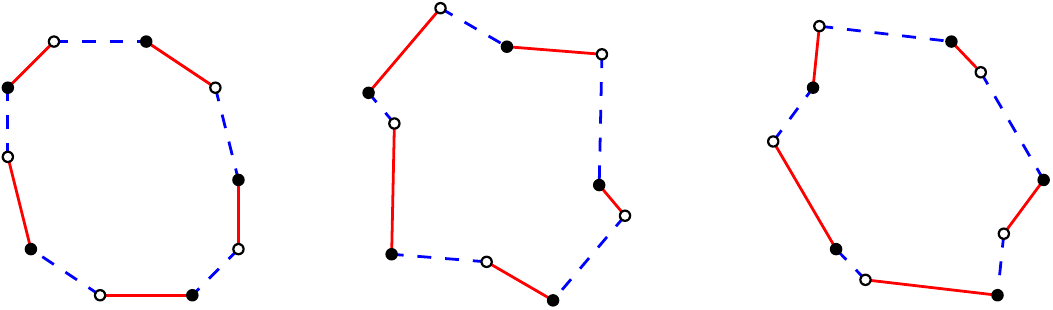}
\caption{Two optimal matchings, represented for three translations of $B$.}
\label{fig:lexbotcycle}
\end{figure}	

Note that, even for $t$ interior to a two-dimensional face of $\D_{B,A}$, more than one matching can attain $\E_{B,A}(t)$.
That is, two different matchings can have equal cost along an open set (and hence, everywhere) and be optimal in it. 
Indeed, the vector $\dd_\pi$ depends only on the centroid of the matched set.  
Therefore, if two matchings have matched sets with the same centroid, and they have the same cost for some translation $t_0\in \R^2$, the matchings have the same cost everywhere.
The solid and the dashed matchings displayed in Figure~\ref{fig:lexbotcycle} are thus both optimal over an open neighborhood of the depicted translations. 
Observe that these matchings use the same subset of points in~$A$. 
It will be shown later that this is in fact a necessary condition for two matchings to be simultaneously optimal in the same open set.

The following property, observed by Rote~\cite{roteNote1}, seems to be well known~\cite{ZikanSilberberg}.

\begin{lemma}\label{lem4_sameSize1Match}
For any $A' \subseteq A$, with $|A'|=m$, the optimal assignments that realize the minimum $M(B+t,A')$ are independent of the translation $t \in{\R^2}$.
\end{lemma}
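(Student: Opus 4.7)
The plan is to exploit the explicit decomposition of $f(\pi,t)$ already recorded in equation~(\ref{eq:Rote}) and observe what happens when the codomain of $\pi$ is restricted to a set $A'$ of exactly $m$ points. In this case every injective assignment $\pi : B \to A'$ is actually a bijection onto $A'$, so the image $\{a_{\pi(1)},\ldots,a_{\pi(m)}\}$ is the entire set $A'$, regardless of which bijection $\pi$ we choose.

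The key consequence is that the vector
\[
\dd_\pi \;=\; 2\sum_{i=1}^m (b_i - a_{\pi(i)}) \;=\; 2\Bigl(\sum_{i=1}^m b_i - \sum_{a\in A'} a\Bigr)
\]
does not depend on the particular bijection $\pi$; it depends only on the centroids of $B$ and $A'$. Denote this common vector by $\dd_{B,A'}$. Plugging this into~(\ref{eq:Rote}) gives, for every injective $\pi:B\to A'$ and every $t\in\R^2$,
\[
f(\pi,t) \;=\; c_\pi + \scalprod{t}{\dd_{B,A'}} + m\|t\|^2,
\]
where the last two terms are common to all such $\pi$.

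It follows that for any two bijections $\pi,\pi':B\to A'$ and any translation $t$, the difference $f(\pi,t) - f(\pi',t) = c_\pi - c_{\pi'}$ is independent of $t$. Hence the set of assignments that minimizes $f(\cdot,t)$ over all injective $\pi:B\to A'$ coincides, for every $t\in\R^2$, with the set of assignments that minimize $c_\pi$, which is a purely combinatorial quantity not depending on the translation. This establishes the lemma.

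There is no real obstacle here: the proof reduces to noticing that when $|A'|=|B|$, the summation defining $\dd_\pi$ telescopes to a quantity that depends only on the two point sets and not on the matching. The slight subtlety worth emphasizing in the write-up is that the statement is about \emph{the set} of optimal assignments (not just the optimal value), which is exactly what the $t$-independent comparison of the $c_\pi$'s delivers.
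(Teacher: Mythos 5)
Your proposal is correct and follows essentially the same route as the paper: the paper's one-line proof also observes that for fixed $A'$ the vector $\dd_\pi$ is independent of $\pi$, so the minimizers of $f(\cdot,t)$ coincide with the minimizers of $c_\pi$ for every $t$. You have merely spelled out the telescoping of $\dd_\pi$ and the resulting $t$-independence of $f(\pi,t)-f(\pi',t)$ more explicitly, which is a faithful expansion rather than a different argument.
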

\begin{proof}
For $A'$ fixed, $d_\pi$ is independent of $\pi$, so any assignment $\pi$ that minimizes $c_\pi=\E_{B,A}(0)$ minimizes $\E_{B,A}(t)$ for any $t$.  \qed
\end{proof}

Next, we derive several additional properties of $\D_{B,A}$ which show that the diagram has, locally, low-order polynomial complexity. 

\begin{lemma}\label{lemdirections}
Every edge of $\D_{B,A}$ has a normal vector of the form $a_j-a_i$ for suitable $i\neq j\in\{1,\ldots,n\}$.
\end{lemma}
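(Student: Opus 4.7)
The plan is to realize the edge $e$ as the locus where two optimal assignments $\pi_1$ and $\pi_1'$ have equal cost, chosen so that $\pi_1'$ differs from $\pi_1$ only by the replacement of a single matched $A$-point. From~\eqref{eq:Rote}, the two cost functions $f(\pi_1,\cdot)$ and $f(\pi_1',\cdot)$ agree along a line in the translation plane with normal $\dd_{\pi_1}-\dd_{\pi_1'}$, and since $\dd_\pi = 2\sum_i b_i - 2\sum_i a_{\pi(i)}$, swapping one matched point $a_i$ for another $a_j$ changes $\dd_\pi$ by exactly $2(a_i-a_j)$. So producing such a pair $(\pi_1,\pi_1')$ yields a normal of the required form.

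First I would let $F_1, F_2$ be the two open faces of $\DD$ incident to $e$, with optimal assignments $\pi_1, \pi_2$ respectively. Distinct faces correspond to distinct supporting planes of the lower envelope defining $\E_{B,A}$; in particular $\dd_{\pi_1}\neq\dd_{\pi_2}$, for otherwise coincidence of costs at any point of $e$ would force $c_{\pi_1}=c_{\pi_2}$ and collapse $F_1$ and $F_2$ into a single face. Consequently the matched $A$-subsets $A_1, A_2$ used by $\pi_1, \pi_2$ also differ.

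Next I would look at the symmetric difference $\pi_1\triangle\pi_2$, viewed as an edge set of the complete bipartite graph on $B\cup A$, and decompose it into alternating paths and cycles. Since every $b\in B$ is saturated by both matchings, $B$-vertices always appear internal to a path or cycle, so every path has both endpoints in $A$, and each cycle uses identical $A$-vertices in the two matchings. The discrepancy between $A_1$ and $A_2$ must therefore be witnessed by a path $P$, with endpoints $a_i\in A_1\setminus A_2$ and $a_j\in A_2\setminus A_1$. I would set $\pi_1' = \pi_1\triangle P$: this is again an injective assignment of $B$ into $A$ (all of $B$ remains saturated precisely because $P$ has both endpoints in $A$), whose matched $A$-set is $(A_1\setminus\{a_i\})\cup\{a_j\}$, so that $\dd_{\pi_1}-\dd_{\pi_1'} = 2(a_j-a_i)\neq 0$.

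The main step is to verify that $\pi_1'$ is itself optimal on $e$. Setting $\pi_2' = \pi_2\triangle P$, a per-edge bookkeeping along $P$ yields the exchange identity $f(\pi_1,t)+f(\pi_2,t) = f(\pi_1',t)+f(\pi_2',t)$ for all $t\in\R^2$. For $t\in e$, the left-hand side equals $2\E_{B,A}(t)$, while both summands on the right are at least $\E_{B,A}(t)$, so equality must hold term by term and $\pi_1'$ attains $\E_{B,A}(t)$ on $e$. The distinct planes lifting $\pi_1$ and $\pi_1'$ therefore both touch the lower envelope along $e$, which is thus contained in the line $\{t : \scalprod{t}{\dd_{\pi_1}-\dd_{\pi_1'}} = c_{\pi_1'}-c_{\pi_1}\}$, whose normal is proportional to $a_j - a_i$, as required. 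I anticipate the main technical point to be the exchange identity, which nevertheless reduces to the observation that both sides sum squared distances over exactly the same multiset of $(B,A)$-pairs, grouped into two matchings in two different ways.
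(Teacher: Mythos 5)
Your proposal is correct and follows essentially the same route as the paper: both decompose the symmetric difference of the two optimal assignments into alternating paths and cycles, observe that a path can be flipped while preserving injectivity, and conclude that the flipped matching is also optimal along the edge, so the edge lies on the bisector determined by the path's two $A$-endpoints. Your exchange identity $f(\pi_1,t)+f(\pi_2,t)=f(\pi_1',t)+f(\pi_2',t)$ is just a clean quantitative restatement of the paper's ``otherwise a flip would contradict the optimality of $\pi$ or of $\sigma$'' step.
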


\begin{proof}
Let $E$ be an edge common to the regions associated with the injections $\pi,\sigma:B\to A$.
By definition, $f(\pi,t)=f(\sigma,t)\leq f(\delta,t)$ for every injection $\delta:B\to A$ and for any $t\in E$. 
So $E$ is contained in the line
\[\ell(\pi,\sigma):=\{t\in\R^2\mid\scalprod{t}{d_\pi - d_\sigma}=c_\sigma - c_\pi\}.\]
Let $\pi \triangle \sigma=(\pi \setminus \sigma) \cup (\sigma \setminus \pi)$.
It is easy to see that $\pi \triangle \sigma$ 
consists of a vertex-disjoint union of alternating cycles and alternating paths; that is, the edges of each cycle and path alternate between edges of $\pi$ and edges of $\sigma$. 
Also, each path (and, trivially, each cycle) is of even length.
Let $\gamma_1,\ldots,\gamma_p$ be these 
cycles and paths. 
Every cycle and every path can be ``flipped'' 
independently while preserving the validity of the matching; that is, we can choose, within
any of the $\gamma_j$'s, either all the edges corresponding to $\pi$ or all the ones 
corresponding to $\sigma$, without interfering with other cycles or paths, so that the
resulting collection of edges still represents an injection from $B$ into $A$. 
Observe now that $\ell(\pi,\sigma)=\{t\in\R^2\mid \langle t, \sum_{j=1}^p  d_{\gamma_j} \rangle
 = - \sum_{j=1}^p c_{\gamma_j} \}$, where $d_{\gamma_j}$ is the sum of the terms in 
$d_\pi- d_\sigma$ that involve only the $a_i \in A$ contained in $\gamma_j$ and 
$c_{\gamma_j}$ is analogously defined for $c_\pi - c_\sigma$. Note that $d_{\gamma_j}$ 
is $0$ for every cycle $\gamma_j$ and, therefore, at least one of the $\gamma_j$'s is a path. 
(Otherwise, $d_\pi-d_\sigma=0$ and $\ell(\pi,\sigma)$ is either empty or the entire plane, contrary to our assumptions.)
Then, we must have $\scalprod{t}{d_{\gamma_j} }= -c_{\gamma_j}$ for all $j=1,\ldots,p$ 
and every $t\in \ell(\pi,\sigma)$. Otherwise, a flip in a path or a cycle violating this	 equation 
would contradict the optimality of $\pi$ or of $\sigma$ along $\ell(\pi,\sigma)$. Therefore, 
all the nonzero vectors $d_{\gamma_j}$ must be orthogonal to $\ell(\pi,\sigma)$. Hence, the direction of 
$d_\pi - d_\sigma$ is the same as the one of $d_{\gamma_j}$ for every \emph{path}~$\gamma_j$. 
If a path, say $\gamma_1$, starts at some $a_j$ and ends at some $a_i$, then 
$d_{\gamma_1} = a_j-a_i$, which concludes the proof. \qed
\end{proof}

\smallskip
\noindent{\textit{Remarks}. }
It follows from the proof of Lemma~\ref{lemdirections} that $d_\gamma$ is $0$ if and only if $\gamma$ is a cycle. 
This fact implies that, if two matchings have the same cost over an open set, their symmetric difference consists only of cycles, 
which implies that they match the same subset of $A$, as claimed above. 
Note also that if $A$ is in general position then $\pi \triangle \sigma$ 
has exactly one alternating path, and the pair $a_i$, $a_j$ is unique.

\begin{lemma}\label{lem:structure}
\begin{enumerate}[(i)]
 \item\label{lem1_UnboundedCells} $\D_{B,A}$ has at most $4m(n-m)$ unbounded regions.
 \item\label{lem_bound_edges} Every region in $\D_{B,A}$ has at most $m(n-m)$ edges.
 \item\label{cor1_degree} Every vertex in $\D_{B,A}$ has degree at most $2m(n-m)$.
 \item\label{lem6_convexPath} Any convex path can intersect at most $m(n-m) + n(n-1)$ regions of $\D_{B,A}$.
\end{enumerate}
\end{lemma}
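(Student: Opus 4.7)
\medskip
\noindent\emph{Proof plan.} All four parts rely on Theorem~\ref{thm:Rote} and Lemma~\ref{lemdirections}, combined with the convexity of the faces of $\D_{B,A}$.

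For part (i), I would enclose the bounded features of $\D_{B,A}$ in a large axis-aligned rectangle $\mathcal{Q}$. Every unbounded region has a nonempty recession cone and hence extends in at least one of the four coordinate half-directions, so it meets at least one of the four sides of $\mathcal{Q}$. Applying Theorem~\ref{thm:Rote} to the supporting line of each side, which meets at most $m(n-m)+1$ regions, and then accounting carefully for regions straddling the corners of $\mathcal{Q}$, yields the stated bound $4m(n-m)$.

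For part (ii), I would fix a region $R$ with matching $\pi$ and matched set $S\subseteq A$ of size $m$. By the proof of Lemma~\ref{lemdirections}, each edge $e$ of $R$, shared with a neighboring region whose matching is $\pi'$, corresponds to an alternating path in $\pi\triangle\pi'$ from some $a_i\in S$ (unmatched in $\pi'$) to some $a_j\in A\setminus S$ (matched in $\pi'$), and the outward normal of $e$ is parallel to $a_j-a_i$. Since a convex polygon has at most one edge per outward normal direction, the edges of $R$ inject into the set $S\times(A\setminus S)$, of cardinality $m(n-m)$.

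For part (iii), at a vertex $v$ of degree $d$ I would rotate coordinates so that no edge of $\D_{B,A}$ incident to $v$ is horizontal, and take two horizontal lines $\ell^+$ and $\ell^-$ at small distance $\eps$ above and below $v$, with $\eps$ smaller than the distance from $v$ to any other vertex of $\D_{B,A}$. By genericity, each of the $d$ sectors at $v$ lies entirely either above or below $v$, giving a partition $d=d^++d^-$. The line $\ell^+$ (resp.\ $\ell^-$) enters every sector in its half and hence meets at least $d^+$ (resp.\ $d^-$) regions of $\D_{B,A}$, so Theorem~\ref{thm:Rote} gives $d^\pm\le m(n-m)+1$, yielding $d\le 2m(n-m)+2$. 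A slightly sharper accounting, noting that the extremal sectors on each side can be merged into the global count, recovers the tight bound $2m(n-m)$.

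Part (iv) is the most delicate. The plan is to extend Rote's argument by parameterizing a convex path $\gamma$ by its monotone tangent angle $\theta$. On a subarc along which the tangent stays on one side of each of the at most $n(n-1)$ oriented directions $a_j-a_i$, the Rote-type argument, which depends on the sign of $\langle v,a_j-a_i\rangle$ being constant, yields the bound $m(n-m)$ on the number of region transitions. Each of the at most $n(n-1)$ times the tangent direction sweeps through one of these oriented differences, the path may incur at most one additional region transition, contributing the additive $n(n-1)$ term. The main obstacle is the careful bookkeeping that separates ``within-direction'' transitions (bounded via Rote) from ``direction-reversal'' transitions (bounded by $n(n-1)$) and recombines them to yield the stated $m(n-m)+n(n-1)$.
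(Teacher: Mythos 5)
Your parts (i)--(iii) follow essentially the same route as the paper: a bounding box plus Theorem~\ref{thm:Rote} for (i), the normal directions $a_j-a_i$ with $a_i\in\pi(B)$ and $a_j\notin\pi(B)$ supplied by Lemma~\ref{lemdirections} for (ii), and two nearby parallel lines isolating the vertex for (iii). One slip in (iii): even after rotating so that no edge at $v$ is horizontal, the two sectors containing the horizontal directions at $v$ straddle both half-planes, so the clean partition $d=d^++d^-$ you assert does not hold. Either account for the two straddling sectors explicitly (which, counted once on each side, in fact yields $d^\pm\le m(n-m)-1$ for the strictly one-sided sectors and hence $d\le 2m(n-m)$), or do what the paper does and count \emph{edges} rather than sectors: each of the $d$ edges incident to $v$ is crossed by exactly one of the two lines, and a line meeting at most $m(n-m)+1$ regions crosses at most $m(n-m)$ edges, giving $d\le 2m(n-m)$ with no further sharpening needed.

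The genuine gap is in (iv). You identify the two ingredients --- Rote's argument for a fixed direction, and the fact that the tangent of a convex path crosses each critical direction at most twice --- but you do not explain why they combine \emph{additively}. A per-subarc bound of $m(n-m)$ transitions, summed over up to $n(n-1)$ subarcs, gives a product, not a sum; and the claim that each direction sweep ``incurs at most one additional region transition'' is unjustified as stated, since region transitions need not occur at the sweeps at all. The missing idea is a single global potential: let $\Phi$ be the sum of the ranks, in the current direction of motion, of the $m$ matched points of $A$. Rote's argument shows that $\Phi$ strictly increases at every region transition along a straight piece of the path; at a turn, the ranking changes by a sequence of adjacent transpositions, each decreasing $\Phi$ by at most $1$, and convexity bounds the total number of such transpositions over the whole path by $2\binom{n}{2}=n(n-1)$. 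The number of region transitions is then at most the total increase of $\Phi$, which is at most $\bigl(\Phi_{\max}-\Phi_{\min}\bigr)$ plus the total decrease, i.e.\ $m(n-m)+n(n-1)$. This potential argument is precisely the ``careful bookkeeping'' you defer to; without it the proof of (iv) is not complete.
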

\begin{proof}
(\romannumeral1) Take a bounding box that encloses all the vertices of the diagram. By Theorem~\ref{thm:Rote}, 
every edge of the bounding box crosses at most $m(n-m)+1$ regions of $\D_{B,A}$. The edges of the 
box traverse only unbounded regions, and cross every unbounded region exactly once, except for the
coincidences of the last region traversed by an edge and the first region traversed by the next edge.

(\romannumeral2) 
By Lemma~\ref{lemdirections}, the normal vector of every edge of a region corresponding to an injection $\pi$ is a multiple of 
$a_j-a_i$ for some $a_i\in\pi(B)$ and $a_j\notin\pi(B)$. 
There are exactly $m(n-m)$ such possibilities.

(\romannumeral3) Let $v$ be a vertex of $\D_{B,A}$. Draw two generic parallel lines close enough to each other to enclose~$v$ and no other vertex.
Each edge adjacent to $v$ is crossed by one of the lines, and by Theorem~\ref{thm:Rote} each of these lines crosses at most $m(n-m)$ edges.

(\romannumeral4) We use the following property that was observed in Rote's proof of Theorem~\ref{thm:Rote}.
Suppose that we translate $B$ along a line in some direction $v$.
Rank the points of $A$ by their order in the $v$-direction, i.e., $a < a'$ means that $\scalprod{a}{v} < \scalprod{a'}{v}$
(for simplicity, assume that $v$ is generic so there are no ties).
Let $\Phi$ denote the sum of the ranks of the $m$ points of $A$ that participate in an optimal partial matching. 
As Rote has shown, whenever the set matched by the optimal assignments changes, $\Phi$ must increase.

Now follow our convex path $\gamma$, which, without loss of generality, can be assumed to be
polygonal. As we traverse an edge of $\gamma$, $\Phi$ obeys the above property, increasing
every time we cross into a new region of $\D_{B,A}$.
 When we turn (counterclockwise) at a 
vertex of $\gamma$, the ranking of $A$ may change, but each such change 
consists of a sequence of swaps of consecutive elements in the present ranking. 
At each such swap, $\Phi$ can decrease by 
at most~$1$. Since $\gamma$ is convex, each pair of points of $A$ can be 
swapped at most twice, so the total decrease in $\Phi$ is at most $2\binom{n}{2}=n(n-1)$.

In conclusion, the accumulated increase in $\Phi$, and thus also the total number of regions of $\D_{B,A}$ 
crossed by $\gamma$, is at most
\[
\Bigl( n + (n-1) + \ldots + (n-m+1) \Bigr) - \Bigl( 1 + 2 + \ldots + m \Bigr) + n(n-1)= m(n-m)+n(n-1),
\]
as desired.\qed
\end{proof}

\noindent{\em Remark.} 
In order to gain better understanding of how the potential $\Phi$, introduced in the proof of 
Lemma~\ref{lem:structure}(\ref{lem6_convexPath}), changes when the set $B$ is translated 
along a convex path, consider a standard dual construction~\cite{deBerg}, where the points $a\in{A}$ 
are mapped to lines $a^*$ in the following manner: 
$$
a=(a_x,a_y) \longmapsto y= a_y x + a_x .
$$
The duality is order preserving in the sense that, given two points $a_1, a_2$, and a direction 
$u=(u_x,u_y)$ with $u_x>0$ in the primal plane, then it can be easily checked that $\scalprod{a_2}{ u} > \scalprod{a_1 }{ u}$ 
if and only if $a_2$ is above $a_1$ at the $x$-coordinate that corresponds to $u_y/u_x$ in the dual plane, 
i.e., to the direction of $u$. Thus, we get an arrangement of $n$ lines, in which the heights of 
the lines at any \mbox{$x$-coordinate} in the dual plane represent the order of the points of 
$A$ along the corresponding direction in the primal plane; see Figure~\ref{fig:dualLines} for an illustration.

\begin{figure}[ht]
\begin{center}
\includegraphics{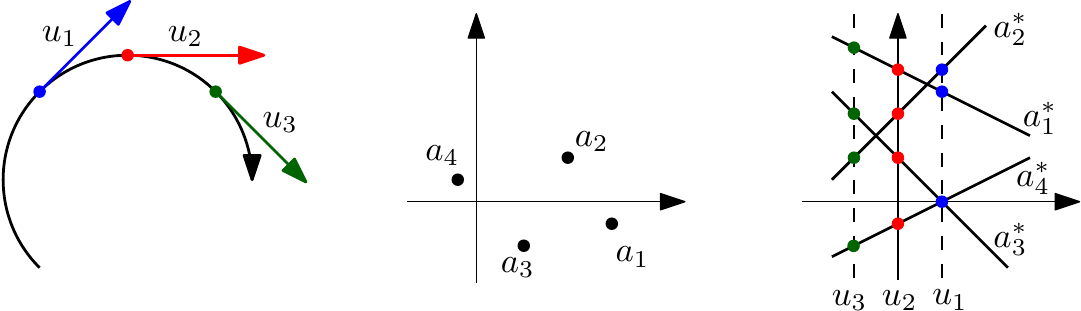}
\caption{An example for a convex path with three highlighted directions, a set of four given points, 
and the resulting dual arrangement. The order $(a_4,a_3,a_2,a_1)$ of the points in the primal 
$x$-direction corresponds to the order of the intercepts of the dual lines on the $y$-axis.}
\label{fig:dualLines}
\end{center}
\end{figure}

Furthermore, for each point on a convex path, we can mark at the corresponding $x$-coordinate in the dual plane, the $m$ dual lines that correspond to the $m$ points which are currently (optimally) matched. 
By the observations in the proofs for Rote's Theorem~\ref{thm:Rote} in~\cite{roteNote1}, if the subset $A'$ of $A$ matched by the optimal matching changes, 
it must be that some point $a_-\in A'$ is replaced by a point $a_+\in A\setminus A'$ further in the direction of the line. 
Indeed, there is such a pair for each path in the symmetric difference of the old and the new matchings. 
In our dual setting, it simply means that when a matching changes, the points sitting on the dual lines of the matched points of $A$ can only 
skip upwards to a line that passes above them. 
The sum of the indices of the marked lines (those that participate in the matching) is exactly the sum of the ranks defined in the proof of Theorem~\ref{thm:Rote}.

This dual setting also demonstrates how and when $\Phi$ could drop --- it  happens in a direction orthogonal to the direction dual to an intersection point of two dual lines, and thus the height of a matched point (its rank) can drop by 1. If one can bound the amount of such drops, i.e., for $m$ points moving along the $n$ dual lines, from left to right, skipping from line to line only upwards, then it immediately gives a bound for the number of intersections of a convex path with $\DD$. 
Unfortunately, an almost quadratic lower bound for the complexity of such monotone paths was in fact presented in~\cite{noHope}, and thus
 it seems hopeless to get a significantly better upper bound for the amount of intersections of a convex path with $\DD$ without exploiting any additional geometric properties.

\section{Bounds on the complexity of \texorpdfstring{$\D_{B,A}$}{D\_\{B,A\}}}	

In this section, we focus on establishing a global bound on the complexity of the diagram $\D_{B,A}$.
We begin by deriving the following technical auxiliary results.

\begin{lemma}\label{mainlemma}
Let $\pi$ be an optimal matching for a fixed translation $t \in \R^2$ such that 
$\Vert b_p + t - a_r \Vert \ne \Vert b_p + t - a_s \Vert$  for all $p \in \{1,\ldots,m\}$ and $r,s \in \{1,\ldots,n \}$ with $r\neq s$.
\begin{enumerate}[(i)]
 \item\label{nocycle} There is no cyclic sequence $(i_1,i_2,\ldots,i_{k},i_1)$ satisfying \\
 $\Vert b_{i_j} + t - a_{\pi(i_{j})} \Vert > \Vert b_{i_j} + t - a_{\pi(i_{j+1})} \Vert$ for all $j \in \{1,\ldots,k\}$ (modulo $k$).
 \item\label{lem2_OnlyMRelevant} Each point of $B+t$ is matched to one of its $m$ nearest neighbors in $A$.
 \item\label{lem3_1ToNN} At least one point in $B+t$ is matched to its nearest neighbor in $A$.
  \item\label{cor2_ordereing} There exists an ordering $(b_1,\ldots,b_m)$ of the elements of $B$, such that each $b_k+t$ is assigned by $\pi$ to its nearest neighbor in $A\setminus\{(b_{\pi(1)},\ldots,b_{\pi(k-1)}\}$, for $k = 1,\ldots,m$.  In particular, $b_k$ is assigned to one of its $k$ nearest neighbors in $A$, for $k=1,\ldots, m$.
\end{enumerate}
\end{lemma}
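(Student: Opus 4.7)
My plan is to handle the four parts in the stated order, treating (i) as the foundational exchange argument, deriving (ii) and (iii) as direct applications of the same swap-or-rotate idea, and obtaining (iv) by induction from (iii).

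For (i), I would argue by contradiction. Given a cyclic sequence $(i_1,\ldots,i_k,i_1)$ satisfying the stated strict inequalities, I define $\pi'$ by setting $\pi'(i_j) := \pi(i_{j+1})$ (indices mod $k$) and $\pi' := \pi$ elsewhere; this $\pi'$ remains an injection because we only cyclically permute the values $\pi(i_1),\ldots,\pi(i_k)$. Summing the $k$ strict inequalities on the affected terms yields $f(\pi',t) < f(\pi,t)$, contradicting the optimality of $\pi$. The genericity hypothesis on the distances is what makes the inequalities strict.

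For (ii), I would assume some $b_p$ is assigned to a point $a_{\pi(p)}$ which is not among its $m$ nearest neighbors of $b_p+t$ in $A$. Then at least $m$ points of $A$ lie strictly closer to $b_p+t$, and $\pi$ can assign these $m$ candidates to at most $m-1$ elements of $B \setminus \{b_p\}$, so at least one of them is unmatched; reassigning $b_p$ to it strictly decreases the cost. For (iii), I would suppose that every $b_i + t$ has some $a_{j_i}$ strictly closer than $a_{\pi(i)}$. If any such $a_{j_i}$ is unmatched, a single swap contradicts optimality; otherwise all $j_i$ lie in $\pi(B)$ and I obtain a self-map $\sigma : \{1,\ldots,m\} \to \{1,\ldots,m\}$ via $\sigma(i) := \pi^{-1}(j_i)$. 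Since $j_i \neq \pi(i)$, we have $\sigma(i) \neq i$, so iterating $\sigma$ from any starting index must enter a cycle of length at least $2$, producing a cyclic sequence forbidden by (i).

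For (iv), I proceed by induction on $m$. By (iii), some $b \in B$ is assigned by $\pi$ to its nearest neighbor in $A$; place it first in the ordering. The restriction of $\pi$ to $B \setminus \{b\}$ is an optimal matching of this set into $A \setminus \{a_{\pi(b)}\}$, since any strictly better restricted matching, combined with the pair $(b, a_{\pi(b)})$, would strictly beat $\pi$ on the full instance. The genericity-of-distances hypothesis is inherited by the restricted instance, so the inductive hypothesis extends the ordering. The ``in particular'' assertion follows because at step $k$ only $k-1$ points of $A$ have been removed, so the nearest neighbor of $b_k$ in the remaining set lies among its $k$ nearest neighbors in all of $A$. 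The main point I must be careful about is that every inequality in the exchange arguments is strict, and this is exactly what the genericity hypothesis on distances guarantees; this is what makes both the rotation in (i) and the inductive step in (iv) go through cleanly.
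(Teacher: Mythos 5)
Your proposal is correct and follows essentially the same route as the paper: the cyclic-exchange argument for (i), the counting/swap argument for (ii), a reduction of (iii) to (i) (the paper builds the alternating path $b_1,a_1,b_2,a_2,\ldots$ explicitly, which is exactly the orbit of your fixed-point-free map $\sigma(i)=\pi^{-1}(j_i)$), and the peel-off induction for (iv) using optimality of the restricted matching. No gaps.
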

\begin{proof}
(\romannumeral1) For the sake of contradiction, we assume that there exists a cyclic sequence that satisfies all the prescribed inequalities.
Consider the assignment~$\sigma$ defined by $\sigma(i_j)=\pi(i_{j+1})$ for all $j \in \{1,\ldots,k\}$ (modulo $k$) and $\sigma(\ell)=\pi(\ell)$ for all other indices $\ell$.
Since $\pi$ is a one-to-one matching, we have that $\pi(i_{j}) \neq \pi(i_{j'})$ for all distinct $j,j' \in \{1,\ldots,k\}$ and, consequently, $\sigma$ is
one-to-one as well. It is easily checked that $f(\sigma,t)<f(\pi,t)$, contradicting the optimality of~$\pi$.

(\romannumeral2) For contradiction, assume that, for some point $b \in{B}$, $b+t$ is not matched by $\pi$ to one of its $m$ nearest neighbors in~$A$.
Then, at least one of these neighbors, say $a$, cannot be matched (because these $m$ points can be claimed only by the remaining $m-1$ points of $B+t$).
Thus, we can reduce the cost of $\pi$ by matching $b+t$ to $a$, a contradiction that establishes the claim.

(\romannumeral3) Again we assume for contradiction that $\pi$ does not match any of the points of $B+t$ to its nearest neighbor in~$A$.
We construct the following cyclic sequence in the matching~$\pi$.
We start at some arbitrary point $b_1 \in{B}$, and denote by $a_1$ its nearest neighbor in $A$ (to simplify the presentation, we do not explicitly mention the translation $t$ in what follows).
By assumption, $b_1$ is not matched to $a_1$.
If $a_1$ is also not claimed in $\pi$ by any of the points of $B$, then $b_1$ could have claimed it, thereby reducing the cost of $\pi$, which is impossible.
Let then $b_2$ denote the point that claims $a_1$ in $\pi$.
Again, by assumption, $a_1$ is not the nearest neighbor $a_2$ of $b_2$, and the preceding argument then implies that $a_2$ must be claimed by some other point $b_3$ of $B$.
We continue this process, and obtain an alternating path $\left(b_1,a_1,b_2,a_2,b_3,\ldots\right)$ such that 
the edges $(b_i,a_i)$ are not in $\pi$, and the edges $(b_{i+1},a_i)$ belong to $\pi$, for $i=1,2,\ldots$.
The process must terminate when we reach a point $b_k$ that either coincides with $b_1$, or is such that its nearest neighbor is among the already encountered points $a_i$, $i<k$.
We thus obtain a cyclic sequence as in part~(\ref{nocycle}), reaching a contradiction.

(\romannumeral4) Start with some point $b_1\in{B}$ such that $b_1+t$ goes to its nearest neighbor $a_1$ in $A$ in the optimal partial matching $\pi$; such a point exists by part~(\ref{lem3_1ToNN}).
Delete $b_1$ from $B$, and $a_1$ from $A$.
The restriction of $\pi$ to the points in $B\setminus \left\{{b_1}\right\}$ is an optimal matching for $B\setminus \left\{{b_1}\right\}$ and $A\setminus \left\{{a_1}\right\}$ (relative to~$t$), because otherwise we could have improved $\pi$ itself.
We apply part~(\ref{lem3_1ToNN}) to the reduced sets, and obtain a second point $b_2\in{B\setminus \left\{{b_1}\right\}}$ whose translation $b_2+t$ is matched to its nearest neighbor $a_2$ in $A\setminus \left\{{a_1}\right\}$, which is either its first or second nearest neighbor in the original set~$A$.
We keep iterating this process until the entire set $B$ is exhausted.
At the $k$-th step we obtain a point $b_k\in{B\setminus\left\{{b_1,\ldots,b_{k-1}}\right\}}$, such that the nearest neighbor $a_k$ in $A\setminus \left\{{a_1,\ldots,a_{k-1}}\right\}$ is matched to $b_k$ by $\pi$. \qed
\end{proof}

Observe that the geometric properties in Lemma~\ref{mainlemma} can be interpreted in purely combinatorial terms.
That is, for $t$ fixed, associate with each $b_i\in B$ an ordered list $L_t(b_i)$, called its \emph{preference list},
which consists of the points of $A$ sorted by their distances from $b_i+t$. 
A matching $\pi: B \hookrightarrow A$ is said to be \emph{better} than another (distinct) matching $\sigma:B \hookrightarrow A$ if 
either $\pi(b)=\sigma(b)$ or $\pi(b)$ appears before $\sigma(b)$ in the preference list of $b$, for each $b \in B$. 
A matching is called \emph{Pareto efficient} (hereafter, \emph{efficient}, for short) if there is no better matching. 
For the balanced case,\footnote{For the case $m=n$, the problem of finding an efficient matching was studied in the game theory literature under the name of the 
\emph{House Allocation Problem}.} 	
where $m=n$, being efficient is equivalent to the non-existence of a cycle as in Lemma~\ref{mainlemma}(\ref{nocycle}) (see~\cite{ShapleyScarf}).
In the unbalanced case, where we have $m<n$ ordered lists on $n$ elements, these properties are not equivalent. 
However, we now give a simple proof of the fact that optimal matchings are efficient.

\begin{lemma}\label{lemmaStabPref}
Let $t \in \R^2$ be such that $\Vert b_p + t - a_r \Vert \ne \Vert b_p + t - a_s \Vert$  for all $p \in \{1,\ldots,m\}$ and $r,s \in \{1,\ldots,n \}$ with $r\neq s$.
Every optimal matching for $t$ is an efficient matching for the corresponding set $L= \{L_t(b_i) \mid i \in \{1,\ldots m\}\}$ of preference lists.
\end{lemma}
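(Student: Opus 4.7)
The plan is to argue directly by contradiction, using only the definitions and the distinct-distances hypothesis. Suppose $\pi$ is optimal for $t$ but not efficient. Then by definition there exists an injective matching $\sigma : B \hookrightarrow A$, with $\sigma \ne \pi$, that is better than $\pi$; that is, for every $b_i \in B$, either $\sigma(b_i) = \pi(b_i)$, or $\sigma(b_i)$ appears before $\pi(b_i)$ in the preference list $L_t(b_i)$.

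The key observation is that the preference list $L_t(b_i)$ is defined by sorting $A$ by distance from $b_i + t$, and the distinct-distances assumption $\|b_p + t - a_r\| \ne \|b_p + t - a_s\|$ for $r \ne s$ guarantees that this ordering is strict. Hence, for each $b_i$, the fact that $\sigma(b_i)$ weakly precedes $\pi(b_i)$ in $L_t(b_i)$ translates to
\[
\| b_i + t - \sigma(b_i) \| \,\le\, \| b_i + t - \pi(b_i) \|,
\]
with strict inequality exactly when $\sigma(b_i) \ne \pi(b_i)$.

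Since $\sigma \ne \pi$, at least one index $i$ satisfies $\sigma(b_i) \ne \pi(b_i)$, and for that $i$ the inequality is strict. Squaring all the distances preserves the inequalities (they are between nonnegative quantities), and summing over $i = 1, \ldots, m$ yields
\[
f(\sigma, t) \;=\; \sum_{i=1}^m \| b_i + t - \sigma(b_i) \|^2 \;<\; \sum_{i=1}^m \| b_i + t - \pi(b_i) \|^2 \;=\; f(\pi, t),
\]
which contradicts the optimality of $\pi$ at $t$. Therefore $\pi$ must be efficient.

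There is no serious obstacle: the entire argument is a one-step reduction, and the only place where care is required is in invoking the distinct-distances hypothesis to guarantee that the preference lists are strictly (not just weakly) ordered, so that ``appears before in the list'' coincides with ``strictly closer,'' and in noting that $\sigma$, being a matching by assumption, is automatically injective, so the displayed sum for $\sigma$ is a legitimate value of the cost function $f(\cdot, t)$.
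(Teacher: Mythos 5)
Your proof is correct and is essentially identical to the paper's own argument: both proceed by contradiction, converting ``appears weakly earlier in every preference list'' into termwise inequalities on squared distances (strict for at least one index since $\sigma\neq\pi$) and summing to contradict optimality. Your added remark that the distinct-distances hypothesis makes the lists strictly ordered is a correct and slightly more explicit justification of the same step.
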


\begin{proof}
Let $\pi$ be an optimal matching for $B+t$ and $A$. 
If $\pi$ is not efficient for~$L$, there is a better matching. 
That is, there exists a matching $\sigma \neq \pi$ such that for all $b_i \in B$ either 
$\sigma(b_i)=\pi(b_i)$ or $\sigma(b_i)$ appears before $\pi(b_i)$ in $L_t(b_i)$. 
Then $\| b_i+ t-\sigma(b_i) \|^2 \leq \| b_i+ t-\pi(b_i) \|^2$ for all $b_i \in B$ and, since  $\sigma \neq \pi$, 
we have $f(\sigma,t) < f(\pi,t)$, which contradicts the optimality of $\pi$.  \qed
\end{proof}

Note also that the proofs of parts
Lemma~\ref{mainlemma}(\ref{lem2_OnlyMRelevant})--(\ref{cor2_ordereing}) can be carried out in this abstract setting, and
hold for any efficient matching where there are no ties in the preference lists.
Part~(\ref{cor2_ordereing}) immediately yields an upper bound of $m!$ on the number of efficient matchings and, in addition, implies that only the first $m$ elements of each $L_t(b_i)$ are relevant.   
A similar bound for the balanced case is implicitly implied by the results in~\cite{Abdul}.
The bound is tight for the combinatorial problem, since if the ordered lists all coincide there are $m!$ different efficient matchings. 
 
A recent study, motivated by the extended abstract~\cite{HJK},
the precursor of this work, considers this combinatorial problem 
and derives the following result. 

\begin{lemma}[Asinowski et al.~\cite{asinetal}]\label{thm:stable}
The number of elements that belong to some efficient matching 
with respect to $m$ ordered preference lists is at most $m(\ln m +1)$.
\end{lemma}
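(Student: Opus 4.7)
\medskip

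\noindent\textbf{Proof plan.}
The plan is two-fold: first reformulate the claim combinatorially via the serial-dictatorship (SD) interpretation, and then derive the harmonic bound through a weighted/probabilistic averaging. As a preliminary step, I would establish the equivalence (folklore in the one-sided-market literature, and a direct strengthening of Lemma~\ref{mainlemma}(\ref{cor2_ordereing}) combined with the fact that every SD output is efficient): an injective matching of the $m$ agents into the elements is efficient with respect to the given preference lists if and only if it is produced by SD under some permutation $\pi$ of the agents. Here ``SD under $\pi$'' means that the agents are processed in the order $\pi$, and each one, in its turn, is matched to its highest-ranked element among those not yet used. Under this equivalence, the set $U$ whose cardinality we wish to bound is exactly the union, over all $m!$ permutations $\pi$, of the $m$ elements picked by the SD run under~$\pi$.

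Second, to bound $|U|$, I would draw a uniformly random permutation $\pi$ and, for each $a \in U$, denote by $p(a)$ the probability that the SD under $\pi$ picks $a$. Since every realization picks exactly $m$ distinct elements, the averaging identity $\sum_{a \in U} p(a) = m$ holds, and $U$ is precisely the support of $p$. The target inequality $|U|\le m H_m \le m(\ln m + 1)$ would follow if one could show $|U| \le H_m \cdot \sum_{a\in U} p(a)$, which I would try to do by a layered cascade: partition $U$ into classes $U_1, U_2, \ldots, U_m$ according to the earliest step at which each element becomes the top remaining choice of some agent in some SD run, and bound the size of the $k$-th class by roughly $m/k$ by exploiting the combinatorial symmetry of SD. Summing telescopes to $\sum_{k=1}^m m/k = m H_m$.

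The main obstacle is making the $1/k$-factor at each depth rigorous. A naive bound (for instance, $p(a)\ge 1/m$ for each $a\in U$, or $|U_k|\le m$) only yields $|U|\le m^2$, which falls short of the harmonic target. Closing this gap requires exploiting the fact that \emph{early} SD picks are heavily shared across permutations (so they carry a disproportionate amount of probability mass), while elements first appearing at larger cascade depth $k$ are not only fewer in number per agent but also correspond to a specific set of $k-1$ prior picks that forces a $1/k$-type dilution. Formalising this—by an amortised potential function tracking, at each step, how many permutations still have a given element as the top remaining choice of some agent—is the delicate core of the argument. It is carried out in~\cite{asinetal}; our plan essentially reproduces their analysis, and the combinatorial invariant they identify is the key non-routine ingredient.
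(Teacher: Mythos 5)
First, note that the paper does not prove Lemma~\ref{thm:stable} at all: it is imported as a black box from Asinowski et al.~\cite{asinetal}, so there is no internal proof to compare your attempt against. Your first step --- reducing to serial dictatorship --- is sound and consistent with how the paper sets things up: the adaptation of Lemma~\ref{mainlemma}(\ref{cor2_ordereing}) to the abstract setting says precisely that every efficient matching is the outcome of a greedy run under some ordering of the agents, and only that direction of your claimed equivalence is needed to bound the union of images from above.

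The gap is in the second step, and it is twofold. (1) The proposal never actually establishes the harmonic bound: the ``$1/k$-dilution'' is announced as the delicate core and then deferred to~\cite{asinetal} itself, which makes the write-up circular as a standalone proof of the lemma. (2) More seriously, the specific layering you propose is false as stated. Take $m$ agents in pairs, where agents $2j-1$ and $2j$ share the same top choice $t_j$ but have distinct second choices $s_{2j-1},s_{2j}$, with all $3m/2$ elements distinct. Every $s_i$ first becomes some agent's top remaining choice at step $2$ of a serial-dictatorship run (right after its partner takes the shared top choice), so your class $U_2$ contains $m$ elements rather than roughly $m/2$; the bound $|U_k|\lesssim m/k$ already fails at $k=2$. (The total here is $3m/2\le mH_m$, which shows the loss must be amortized across layers rather than bounded layer by layer.) Similarly, the averaging identity $\sum_{a\in U}p(a)=m$ yields only $|U|\le m/\min_a p(a)$, and $\min_a p(a)$ can be as small as $1/m!$, so the probabilistic framing contributes nothing without a genuine lower bound on the probability mass carried by each element. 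To turn this into a proof you would need the actual charging or recursion from~\cite{asinetal}; short of that, the statement should simply be cited, exactly as the paper does.
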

The properties derived so far imply the following significantly improved upper bound on the complexity of $\D_{B,A}$.

\begin{theorem}\label{thm:ourThm}
The combinatorial complexity of $\D_{B,A}$ is $O( n^2m^{3.5} (e \ln m+e)^m)$.
\end{theorem}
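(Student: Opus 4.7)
The plan is to bound the combinatorial complexity by first bounding the number $R$ of full-dimensional regions of $\DD$: because $\DD$ is a convex subdivision of $\R^2$ whose vertices all have degree at least three, Euler's formula yields $E,V=O(R)$, so the whole complexity is $\Theta(R)$ and it suffices to count matchings $\pi$ that are optimal on some region.

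First I would combine Lemma~\ref{lemmaStabPref} with Lemma~\ref{thm:stable} to turn each optimal matching into a choice of an $m$-subset of a small ``universe''. By Lemma~\ref{lemmaStabPref}, any matching that is optimal at a translation $t$ is Pareto efficient with respect to the preference lists $L_t$, and by Lemma~\ref{thm:stable} all such efficient matchings have their images inside a universe $U(t)\subseteq A$ with $|U(t)|\le m(\ln m+1)$. Moreover, by Lemma~\ref{lem4_sameSize1Match}, for each $m$-subset $S\subseteq A$ the optimal matching from $B$ to $S$ is unique and independent of $t$. Hence, for fixed $t$ the matchings that can be optimal at $t$ injectively correspond to elements of $\binom{U(t)}{m}$, and so their number is at most
\[
\binom{m(\ln m+1)}{m} \le \frac{(e\ln m+e)^m}{\sqrt{2\pi m}} = O\!\left(\frac{(e\ln m+e)^m}{\sqrt{m}}\right),
\]
via the standard estimate $\binom{ck}{k}\le (ce)^k/\sqrt{2\pi k}$ applied with $c=\ln m+1$.

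Next I would bound the ``geometric multiplicity'', namely the number of different universes $U(\cdot)$ that occur as $t$ ranges over $\R^2$. The plan is to subdivide $\R^2$ into cells on which $U(\cdot)$ is constant, using Lemma~\ref{lemdirections} to restrict attention to an arrangement whose lines have only the $\binom{n}{2}$ normals $a_j-a_i$, combined with the potential-function argument underlying Rote's Theorem~\ref{thm:Rote} to bound the number of relevant positional values along each such direction. The target is an $O(n^2m^4)$ bound on the number of these cells in the plane; multiplying by the per-cell estimate of the previous paragraph would then yield $R = O(n^2m^{3.5}(e\ln m+e)^m)$.

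The main obstacle I foresee is precisely this last geometric step. A direct stratification by the full preference arrangement of perpendicular bisectors of pairs in $A$ (viewed from each $b_i+t$) gives only $\Theta(n^4m^2)$ cells, which would weaken the final estimate by a factor of $n^2/m^2$. Getting down to $O(n^2m^4)$ forces us to exploit the fact that $U(t)$ is determined by much coarser information than the full preference lists---essentially by which efficient matchings exist, whose transitions are controlled by the restricted $\binom{n}{2}$-family of edge directions from Lemma~\ref{lemdirections} rather than by all $O(mn^2)$ bisectors. Reconciling this with the general $d$-dimensional form $(n^2m)^d$, which improves to $(nm^2)^d$ specifically in the plane (as noted in the footnote), is where the delicate combinatorial-geometric balancing must be carried out.
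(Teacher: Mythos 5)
Your overall architecture matches the paper's: build an auxiliary convex subdivision on which the relevant preference information is frozen, bound the number of optimal matchings per cell by $\binom{m(\ln m+1)}{m}=O\bigl((e\ln m+e)^m/\sqrt{m}\bigr)$ via Lemma~\ref{lemmaStabPref}, Lemma~\ref{thm:stable} and Lemma~\ref{lem4_sameSize1Match}, and multiply by an $O(n^2m^4)$ bound on the number of cells. The combinatorial half of your argument is correct and is exactly what the paper does. However, the geometric half --- the $O(n^2m^4)$ bound --- is precisely the step you leave unresolved (you name it as ``the main obstacle I foresee''), and the route you sketch for it would not deliver the bound. Restricting to the $\binom{n}{2}$ normal directions $a_j-a_i$ of Lemma~\ref{lemdirections} does not help here: the auxiliary subdivision is built from perpendicular bisectors $\beta_{a-b,a'-b}$, of which there are $O(n^2m)$ regardless of how few directions they use, and their full arrangement has $\Theta(n^4m^2)$ cells. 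Likewise, Rote's potential argument bounds crossings of $\DD$ along a line, not the size of this auxiliary arrangement.

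The missing idea is a Clarkson--Shor random-sampling argument. By Lemma~\ref{mainlemma}(\ref{lem2_OnlyMRelevant}), each matched point of $A$ is among the $m$ nearest neighbors of the corresponding $b+t$, so only the portion of a bisector $\beta_{a-b,a'-b}$ where both $a$ and $a'$ are among the $m$ nearest neighbors of $b+t$ can affect the structure; the rest is ``transparent.'' One assigns to each such bisector a conflict set of size at most $m$ (a smallest $A'\subseteq A$ whose removal makes $a,a'$ the two nearest neighbors of $b+t$ at some point $t$ of the bisector), and applies Clarkson--Shor to conclude that, for each fixed $b$, the number of bisectors with nonempty relevant portion is $O(m^2)$ times the complexity of the Voronoi diagram of a random sample of $n/m$ points, i.e., $O(nm)$ instead of $O(n^2)$. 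Summing over $b\in B$ gives $O(nm^2)$ relevant bisectors in total, whose arrangement has $O(n^2m^4)$ cells, and in each such cell the first $m$ entries of every preference list (hence the universe $U(t)$) are fixed. Without this sampling step your bound degrades to $O(n^4m^{1.5}(e\ln m+e)^m)$, so the gap is essential to the stated theorem.
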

\begin{proof}
The proof has two parts. First, we identify a convex subdivision $K$ such that in each of its 
regions the first $m$ elements of each of the ordered preference lists $L_t(b)$ of neighbors of each $b+t$, according to their distance 
from $b+t$, are fixed for all $b\in B$, and appear in a fixed order in the list. 
We show that the complexity of $K$ is only polynomial;
specifically, it is $O(n^2m^4)$.
Second, we give an upper bound on how many regions of $\D_{B,A}$ can intersect a given region of $K$,
using Lemma~\ref{thm:stable}. Together, these imply an upper bound on the complexity of $\D_{B,A}$.

In order to bound the complexity of $K$, fix $b\in B$, and consider the coarser subdivision $\VD(b,A)$, 
in which only the single list $L_t(b)$ is required to be fixed within each cell.
A naive way of bounding the complexity of $\VD(b,A)$ is to draw all the 
$O(n^2)$ bisectors between the pairs of points in $A-b$, and form their 
arrangement. Each cell of the arrangement has the desired property, as 
is easily checked. As a matter of fact, this naive analysis can be applied 
to the entire structure, over all $b\in B$, in which the arrangement obtained for the individual members $b \in B$ are overlayed into one common arrangement. 
Altogether there are $O(n^2m)$ 
such bisectors, and their arrangement thus consists of $O(n^4m^2)$ regions.

We obtain an improved bound of $O(n^2m^4)$ on the complexity of $K$.
For this, note that it suffices
to draw only relevant portions of some of the bisectors. Specifically, let $b\in B$ 
and $a,a'\in A$. In view of Lemma~\ref{mainlemma}(\ref{lem2_OnlyMRelevant}),
we need to consider only the portion of the bisector $\beta_{a-b,a'-b}$
between $a-b$ and $a'-b$ that consists of those points $t$ such that $a$ and $a'$ 
are among the $m$ nearest neighbors of $b+t$ in $A$; other portions of the bisector 
are ``transparent'' and have no effect on the structure of~$K$. 

In general, the relevant portion of a bisector $\beta_{a-b,a'-b}$ need not be connected.
To simplify the analysis, we will bound the number of (entire) bisectors of this form
whose relevant portion is nonempty. Moreover, we will carry out this
analysis for each $b\in B$ separately.

This analysis can be carried out via the Clarkson-Shor technique~\cite{ClarksonShor}, albeit
in a somewhat non-standard manner. Specifically, with $b$ fixed, we have 
the set $A$ of $n$ points, and a system of bisectors $\beta_{a-b,a'-b}$, 
each defined by two points $a,a'\in A$.  Each bisector $\beta_{a-b,a'-b}$
has a \emph{conflict set}, which we define to be a smallest subset $A'$ of 
$A$, such that there exists a point $t$ on the bisector, such that the two
nearest neighbors of $b+t$ in $A\setminus A'$ are $a$ and $a'$. Clearly, the
conflict set is not uniquely defined, but this is fine for the 
Clarkson-Shor technique to apply, because, if we draw a random sample $R$ 
of $A$, it still holds that the probability that $\beta_{a-b,a'-b}$
will generate an edge of the Voronoi diagram of $R-b$ is \emph{at least} 
the probability that $a$ and $a'$ are chosen in $R$ and none of the points 
in the specific conflict set is chosen. This lower bound suffices for 
the Clarkson-Shor technique to apply, and it implies that the number of 
bisectors that contribute a portion to $\VD(b,A)$ (each of which has a 
conflict set of size at most $m$) is $O(m^2)$ times the complexity of 
the Voronoi diagram of $R-b$, for a random sample $R$ of $n/m$ points of $A$. 
That is, the number of such bisectors is $O(nm)$, instead of the number 
$O(n^2)$ of all bisectors. 
Summing over $b \in B$, we obtain a total of $O(nm^2)$ bisectors instead of $O(n^2m)$. 
The claim about the complexity of $K$ is now
immediate. 

We now consider all possible translations $t$ in the interior of some fixed region 
$\tau$ of $K$ and their corresponding optimal matchings. 
Lemma~\ref{lemmaStabPref}
ensures that all of them must be efficient with respect to the fixed preference lists 
$L_t(b)$, for $b\in B$.
In addition, Lemma~\ref{lem4_sameSize1Match} ensures that we only need to bound the number of different image sets of such efficient matchings. 
Using the bound in Lemma~\ref{thm:stable}, we can derive that the number of optimal matchings for translations in $\tau$ is then at most 
\[ \binom{m(\ln m +1)}{m}  \le \frac{m^m  (\ln m +1)^m}{m!}  = O\left( \frac{ (e\ln m + e)^m }{ \sqrt{m}}  \right),\] where in the second step we used Stirling's approximation. 
Hence, by multiplying this bound by the number of regions in $K$, we conclude that the number of 
assignments corresponding to optimal matchings, and thus also the complexity of 
$\D_{B,A}$, is at most $O(n^2m^{3.5} (e \ln m + e)^m)$. \qed
\end{proof}

The following proposition sets
an obstruction for the combinatorial approach alone to yield a polynomial bound for $\D_{B,A}$.

\begin{proposition}\label{claim:2}
For every $n\geq \lfloor\frac{m}2\rfloor +m $, there exist $m$ preference lists,	 with elements in $\{1,\ldots,n\}$, with $\Omega\left(\frac{2^m}{\sqrt{m}}\right)$ different images of efficient matchings.
\end{proposition}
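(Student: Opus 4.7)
The goal is to exhibit preference lists whose efficient matchings realize $\binom{m}{\lfloor m/2\rfloor} = \Theta(2^m/\sqrt{m})$ distinct image sets, which by Stirling yields the claim. I would set $k := \lfloor m/2\rfloor$ and partition the ground set into three types of items: $m$ \emph{proper} items $1,\ldots,m$, $k$ \emph{scarce} items $e_1,\ldots,e_k$, and $n-m-k$ \emph{padding} items. Each agent $b_i$ receives the same top of its preference list, namely $e_1, e_2, \ldots, e_k$, immediately followed by its ``own'' proper item $i$, with all remaining items appearing afterwards in an arbitrary order. Thus all $m$ agents agree on the ranking of the scarce items, and only the identity of the item in position $k+1$ distinguishes the lists.

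Next, for each $k$-subset $S = \{i_1 < \cdots < i_k\} \subseteq \{1,\ldots,m\}$, I would consider the matching $\pi_S$ that sends $b_{i_s}$ to $e_s$ for $s = 1,\ldots,k$ and $b_j$ to $j$ for every $j \notin S$. Its image is $\{e_1,\ldots,e_k\} \cup (\{1,\ldots,m\} \setminus S)$, which uniquely determines $S$, so distinct subsets already yield distinct image sets. The heart of the proof is to verify that each $\pi_S$ is in fact efficient.

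For the efficiency argument, I would suppose that some matching $\sigma$ is at least as good as $\pi_S$ for every agent, and deduce $\sigma = \pi_S$. The preference lists force $\sigma(b_j) \in \{e_1,\ldots,e_k, j\}$ for each $j \notin S$, and $\sigma(b_{i_s}) \in \{e_1,\ldots,e_s\}$ for each $s \in \{1,\ldots,k\}$. Consequently the $k$ agents indexed by $S$ must collectively occupy the $k$ scarce items, so by injectivity every $b_j$ with $j \notin S$ is forced back to $j$. Writing $\sigma(b_{i_s}) = e_{\tau(s)}$, the resulting permutation $\tau$ of $\{1,\ldots,k\}$ satisfies $\tau(s) \leq s$ for all $s$, and a short induction on $s$ yields $\tau = \mathrm{id}$. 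Hence $\sigma = \pi_S$.

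I expect this efficiency step to be the main obstacle, and the key combinatorial observation that makes it work is that all agents share the same ranking of the scarce items $e_1,\ldots,e_k$, so Pareto-dominance reduces to a componentwise constraint on a permutation of $\{1,\ldots,k\}$ that admits only the identity as solution. Once efficiency is in hand, the $\binom{m}{k} = \Omega(2^m/\sqrt{m})$ distinct images of the $\pi_S$ complete the proof for $n = m+k$; and for larger $n$, appending the padding items at the tail of every preference list changes nothing, since every agent under every $\pi_S$ holds an item strictly preferred to any padding item.
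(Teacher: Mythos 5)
Your proposal is correct and uses essentially the same construction as the paper: a common prefix of the $\lfloor m/2\rfloor$ scarce items in a fixed order on every list, followed by a distinct ``own'' item for each agent, yielding $\binom{m}{\lfloor m/2\rfloor}=\Omega(2^m/\sqrt{m})$ image sets. The only (cosmetic) difference is in how efficiency is certified: the paper generates these matchings as serial-dictatorship outcomes over all orderings of the agents and invokes the standard fact that such outcomes are Pareto efficient, whereas you fix one canonical matching per $k$-subset and verify its efficiency directly via the componentwise constraint $\tau(s)\le s$ forcing $\tau=\mathrm{id}$; both verifications are valid.
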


\begin{proof}
We construct a set of lists such that for every $i \in \{1,\ldots,m\}$ the $\lfloor\frac{m}2\rfloor$ smallest elements are the same (and appearing in the same order); we denote by $S$ the set of these elements. 
For the position $\lfloor \frac{m}2\rfloor+1$ of the lists, we use a set $S'$ of $m$ distinct elements such that $S \cap S'= \emptyset$. 
Given a permutation~$\lambda$ of $\{1,\ldots,m\}$, consider the matching assigning to each $i \in \{1,\ldots,m\}$ the first element in its list,
in the order $\lambda$, that was not assigned to any previous
element. 
It is easy to see that this matching is efficient and that its image consists of $S$ and the subset of $S'$ corresponding to the
last $\lceil \frac{m}2\rceil$ positions of~$\lambda$. 
Therefore, every subset of $S'$ of size $\lceil \frac{m}2\rceil$ is, together with $S$, the image
of an efficient matching. Hence, $\binom{m}{\lceil \frac{m}2\rceil} = \Omega\left(\frac{2^m}{\sqrt{m}}\right)$ 
different sets correspond to images of efficient matchings. \qed
\end{proof}

We now derive a lower bound on the complexity of $\D_{B,A}$.
Consider the arrangement $K$ introduced in the proof of Theorem~\ref{thm:ourThm}, and note that 
in the interior of each of its two-dimensional faces the sequence of the first $m$ points of $A$ closest to 
$b_i$ is fixed, and uniquely defined, for all $b_i \in B$. 
Any optimal matching in the interior of a two-dimensional face of $K$ must be efficient for the corresponding preference lists.  
We will provide a pair of point sets generating many different preference lists that, in addition, have disjoint sets of efficient matchings. 
In order to prove it, we need first the following property of the efficient matchings of a certain type of preference lists.  

\begin{lemma}\label{blockpreference}
Let $\mathcal L$ be a set of preference lists $L_t(b_i)$, for $b_i \in B$, and let $\{a_1,\ldots, a_m\}$ 
be a subset of $m$ distinct points of $A$, satisfying the following conditions, 
for some $0\le j\le m-1$.
\begin{enumerate}[(a)]
\item $L_t(b_i)$ starts with $(a_1,a_2,\dots,a_j)$, for $i=1,\ldots,j$. \label{block1}
\item $L_t(b_{j+1})$ starts with $a_{j+1}$. \label{block2}
\item $L_t(b_i)$ starts with $(a_{j+2},\dots,a_m)$, for $i=j+2,\ldots,m$.  \label{block3}
\end{enumerate}
Then, every efficient matching for $\mathcal L$, matches $B$ to $\{a_1,\ldots,a_m\}$.
\end{lemma}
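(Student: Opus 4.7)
The plan is to prove the stronger inclusion $\{a_1,\ldots,a_m\} \subseteq \pi(B)$ for any efficient matching $\pi$; since both sets have exactly $m$ elements and $\pi$ is injective, this already forces $\pi(B) = \{a_1,\ldots,a_m\}$. I would proceed by contradiction: assume some $a_k \in \{a_1,\ldots,a_m\} \setminus \pi(B)$, and exhibit a strictly better matching $\sigma$, contradicting the efficiency of $\pi$ in the sense defined before Lemma~\ref{lemmaStabPref}.

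The construction of $\sigma$ splits into three cases according to the block to which $a_k$ belongs. In the first block ($k \le j$), hypothesis~(\ref{block1}) says that $b_1,\ldots,b_j$ all have $(a_1,\ldots,a_j)$ as their top $j$ choices. The key pigeonhole step is this: if $\pi$ matched every $b_i$ with $i \le j$ inside $\{a_1,\ldots,a_j\}$, then by injectivity $\pi(\{b_1,\ldots,b_j\})$ would be a $j$-element subset of the $j$-element set $\{a_1,\ldots,a_j\}$, hence equal to it, contradicting $a_k \notin \pi(B)$. So some $b_i$ with $i \le j$ satisfies $\pi(b_i) \notin \{a_1,\ldots,a_j\}$, which places $\pi(b_i)$ strictly later than $a_k$ in $L_t(b_i)$. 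Defining $\sigma(b_i) = a_k$ and $\sigma = \pi$ elsewhere is then a valid injection (because $a_k$ was unused) that is strictly better than $\pi$. The third block ($k \ge j+2$) is handled symmetrically, using hypothesis~(\ref{block3}) and the equal sizes of $\{b_{j+2},\ldots,b_m\}$ and $\{a_{j+2},\ldots,a_m\}$. The middle case ($k = j+1$) is the simplest: hypothesis~(\ref{block2}) makes $a_{j+1}$ the very top of $L_t(b_{j+1})$, so if $a_{j+1}$ is unused we simply redirect $b_{j+1}$ to it.

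There is no major obstacle; the proof is routine once the case distinction is in place. The one subtlety worth highlighting is the pigeonhole: it is essential that the \emph{number} of $B$-indices in each of the two outer groups exactly matches the \emph{size} of the corresponding block of $A$-candidates, as hypotheses~(\ref{block1}) and~(\ref{block3}) enforce, so that missing a single $a_k$ forces a $B$-point of the right group to spill outside its block, making a single-edge redirection already improving. Without this tight cardinality match, the one-edge redirection trick would break and one would be forced to chase longer improvement paths.
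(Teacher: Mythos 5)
Your proof is correct, but it takes a genuinely different route from the paper's. The paper first transfers Lemma~\ref{mainlemma}(\ref{cor2_ordereing}) to the abstract setting --- every efficient matching arises from a ``serial dictatorship'' ordering of $B$ in which each element greedily takes its most preferred unclaimed point --- and then proves the stronger block-wise claim~\eqref{orit} (that $\pi(b_i)\in\{a_1,\ldots,a_j\}$ for $i\le j$, $\pi(b_{j+1})=a_{j+1}$, and $\pi(b_i)\in\{a_{j+2},\ldots,a_m\}$ for $i\ge j+2$) by a minimal-counterexample argument on that ordering. You instead argue directly from the definition of Pareto efficiency: if some $a_k$ is missing from the image, your pigeonhole observation (exact cardinality match between each group of $B$-indices and its block of $A$-candidates) forces some $b$ in the relevant group to be matched outside its block, and a single-edge redirection of that $b$ to $a_k$ is a strict Pareto improvement. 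Your argument is more elementary --- it needs neither the serial-dictatorship characterization nor its adaptation to efficient matchings --- and it establishes exactly the image equality that the lemma asserts and that its sole application (Theorem~\ref{prop:lowerLS}) uses; what it does not give is the finer block-wise assignment structure that the paper's proof obtains as a by-product. Both proofs are complete and correct; yours is shorter and self-contained, the paper's yields slightly more information.
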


\begin{proof}
Let $\pi:B\to A$ be an efficient matching for $\mathcal L$. 
We prove the stronger claim that 
\begin{align}
\pi(&b_i)\in\{a_1,\ldots,a_j\},\quad~~\text{for}~i=1,\ldots,j,\label{orit}\\
\pi(&b_{j+1})=a_{j+1},\quad\quad\quad~~\text{and}\nonumber\\
\pi(&b_i)\in\{a_{j+2},\ldots,a_m\},~\text{for}~i=j+2,\ldots, m.\nonumber
\end{align}
For this, we apply the adaptation of Lemma~\ref{mainlemma}(\ref{cor2_ordereing}) to the abstract setting,
by which there exists an ordering $(b_{i_1},\ldots,b_{i_m})$ of the elements of $B$, such that 
each $b_{i_k}$ is assigned by $\pi$ to the most preferred element in $L_t(b_{i_k})$ that is not already claimed 
by one of $b_{i_1},\ldots, b_{i_{k-1}}$, the elements preceding $b_{i_k}$ in this ordering.

Consider the minimal index $k_0$ for which one of the assertions in \eqref{orit} is violated.
If $i_{k_0}=j+1$, then, by the property of the ordering, there exists another element $b_{i_k}$, with $k<k_0$, such that $\pi(b_{i_k})=a_{j+1}$. 
This violates \eqref{orit}.
This contradicts the minimality of $k_0$, so $k_0\neq j+1$.

Similarly, assume $1\le i_{k_0}\le j$. By the property of the ordering, it must be that $k_0> j$, and there exist $j$ elements 
$b_{i_{k_1}},\ldots,b_{i_{k_j}}$, with $k_1,\ldots,k_j<k_0$, such that $\pi(b_{i_{k_l}})=a_l$, for $l=1,\ldots,j$. 
Since $1\le i_{k_0}\le j$, necessarily one of $k_1,\ldots, k_l$ is larger than $j$. This means that \eqref{orit} is violated for some $i_k<k_0$, which again leads to a contradiction.
The case where $k_0>j$ is symmetric and can be treated in the same manner. \qed
\end{proof}

\begin{figure}[t]
\centering
\subfigure[Point sets on the line. ]{\includegraphics[page=4]{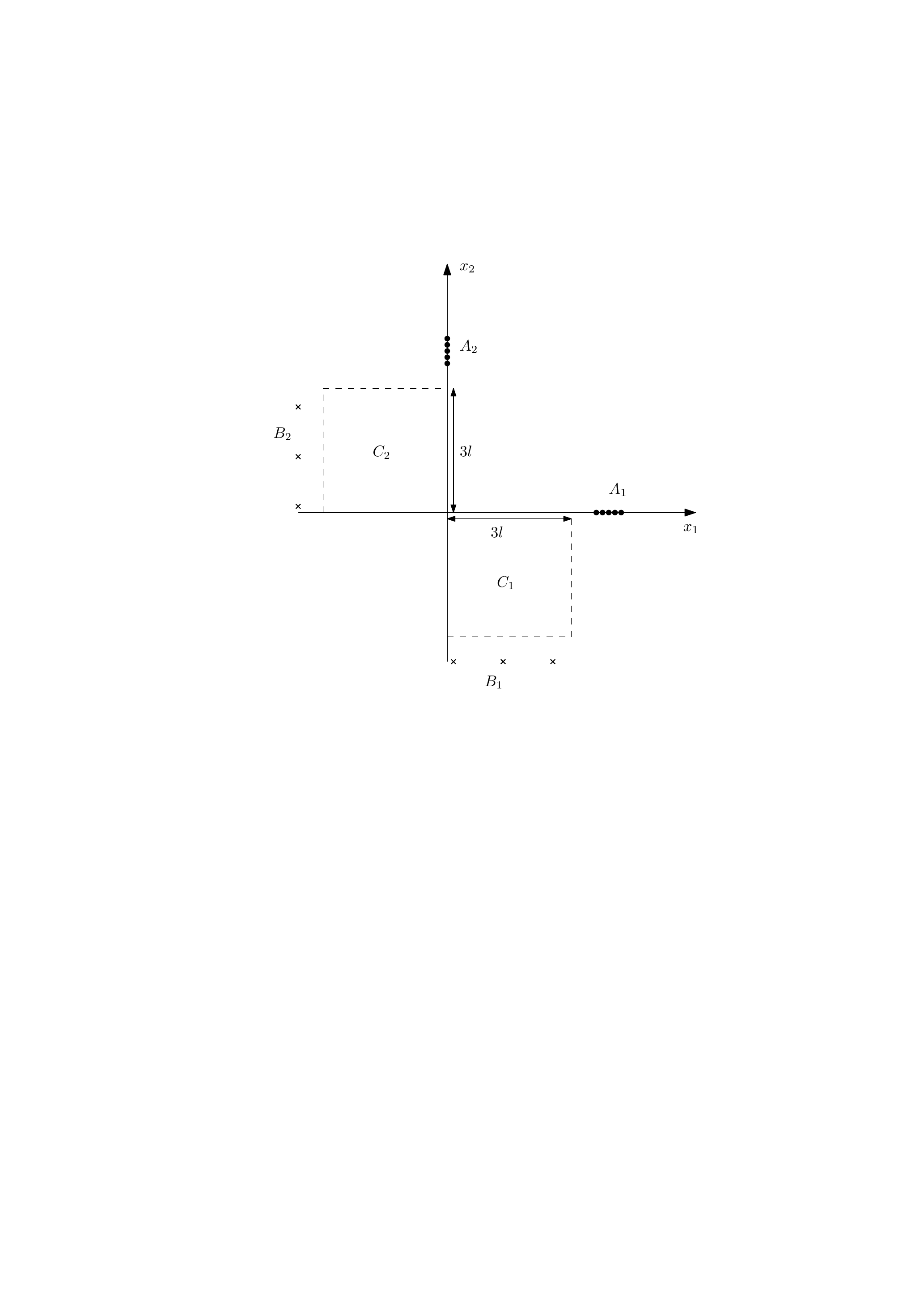}\label{fig:lowerboundexample1}}
\subfigure[Point sets in the plane.]{\includegraphics[page=3]{lowboundex.pdf}\label{fig:lowerboundexample}}
\caption{Construction of the lower bound example of Theorem~\ref{prop:lowerLS}.}
\end{figure}	

\begin{theorem}\label{prop:lowerLS}
For any $m,n \in \N$ with $n \ge m \ge 2$, there exist planar point sets $A,B$ with $|B|=m$ and $|A|=n$ such that its partial-matching diagram $\D_{B,A}$ has ${\Omega(m^2 (n-m)^2)}$ regions.
\end{theorem}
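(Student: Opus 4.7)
\emph{Plan.} The strategy is to exhibit a planar configuration (Figure~\ref{fig:lowerboundexample}) by overlaying two independent one-dimensional instances (Figure~\ref{fig:lowerboundexample1}), one arranged along the positive $x$-axis and one along the positive $y$-axis, both far from the origin. If each 1D copy generates $\Omega(m(n-m))$ distinct optimal matched subsets as its axis translation varies, the 2D ``product'' generates $\Omega(m^2(n-m)^2)$ distinct combinatorial regions over the square of translations where both copies are simultaneously active.

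\emph{One-dimensional block.} Let $m_0 = \lfloor m/2 \rfloor$ and $n_0 = \lfloor n/2 \rfloor$. First, I construct $B_0, A_0 \subset \R$ with $|B_0| = m_0$ and $|A_0| = n_0$ so that, as $B_0$ is translated by $t \in [-T, T]$ along the line (for some constant $T$), the optimal partial matching takes $\Omega(m_0 (n_0 - m_0)) = \Omega(m(n-m))$ combinatorially distinct values. By \eqref{eq:Rote}, this reduces to producing $\Omega(m(n-m))$ pieces on the lower envelope of the parabola family $\{c_S + d_S t + m_0 t^2\}_S$ indexed by $m_0$-subsets $S \subset A_0$, equivalently $\Omega(m(n-m))$ vertices on the lower convex hull of the planar point set $\{(d_S, c_S)\}_S$. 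I take $A_0$ to be an arithmetic progression $\{0, 1, \ldots, n_0-1\}$ and $B_0$ to be a generic perturbation of an arithmetic progression with a slightly different step, chosen so that the optimal matched $m_0$-subset of $A_0$ traces out a ``staircase chain'' of $m_0(n_0-m_0)+1$ distinct subsets, each consecutive pair related by a single elementary swap in which one matched element $a_j$ is replaced by the unmatched neighbor $a_{j+1}$. This realises the rank-potential upper bound underlying the proof of Theorem~\ref{thm:Rote}.

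\emph{Two-dimensional overlay.} Next, fix a constant $R$ much larger than $T$ and the diameter of $A_0 \cup B_0$, and place
\[
A_x = \{(R + a, 0) : a \in A_0\},\quad A_y = \{(0, R + a) : a \in A_0\},
\]
and similarly $B_x, B_y$. Set $A = A_x \cup A_y$ and $B = B_x \cup B_y$, possibly inserting up to two dummy points far from everything else to fix the parities of $|A|$ and $|B|$. For $t = (t_x, t_y) \in [-T, T]^2$, two decoupling properties hold once $R$ is large enough. First, pairing any point of $B_x + t$ with any point of $A_y$ (or vice versa) incurs a cost $\Omega(R^2)$, which dominates the cost $O(m(T^2 + \mathrm{diam}(A_0)^2))$ of any ``product'' matching that sends $B_x$ into $A_x$ and $B_y$ into $A_y$; hence the optimal matching must be of product form. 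Second, because both $B_x$ and $A_x$ lie on the horizontal line $y = 0$, a translation $(t_x, t_y)$ adds the common additive term $|B_x|\cdot t_y^2$ to the cost of every candidate $B_x$-to-$A_x$ matching, so the optimal such matching depends only on $t_x$; symmetrically, the optimal $B_y$-to-$A_y$ matching depends only on $t_y$.

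\emph{Counting and main obstacle.} Combining these, the restriction of $\D_{B,A}$ to the box $[-T,T]^2$ is combinatorially the Cartesian product of the 1D subdivisions for $(B_x, A_x)$ (as a function of $t_x$) and for $(B_y, A_y)$ (as a function of $t_y$). Each 1D factor contains $\Omega(m(n-m))$ intervals, so the product has $\Omega(m^2(n-m)^2)$ rectangular cells, each a distinct region of $\D_{B,A}$, which proves the theorem. The main obstacle is the 1D step: simple cluster-based schemes yield only $\Theta(n-m)$ regions, while realising $\Omega(m(n-m))$ transitions requires an arrangement in which the rank-potential changes used in the proof of Theorem~\ref{thm:Rote} occur one at a time, rather than being merged inside individual clusters.
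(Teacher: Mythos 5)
Your overall architecture --- two far-apart orthogonal copies of a one-dimensional instance whose optimal matchings decouple, so that the region counts multiply --- is exactly the paper's, and your decoupling argument (cross-pairings cost $\Omega(R^2)$, and the matching within each axis-aligned group depends only on the corresponding coordinate of $t$) is sound. The gap is in the one-dimensional block, which you yourself flag as ``the main obstacle'' and then do not close. You assert that $A_0=\{0,1,\ldots,n_0-1\}$ together with a generically perturbed arithmetic progression $B_0$ of slightly different step can be ``chosen so that'' the optimal matched subset runs through $m_0(n_0-m_0)+1$ distinct values, but no argument is given, and it is far from clear that this particular construction works: with $B_0$ also dense and overlapping $A_0$, many of the rank-potential increments from the proof of Theorem~\ref{thm:Rote} occur essentially simultaneously, and the optimal matched set need not change at each such event --- which is precisely the merging phenomenon you identify. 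Since the entire content of the theorem is this one-dimensional count, the proof is incomplete as it stands.

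The paper's construction resolves this differently and is worth comparing. It keeps $A$ as a dense block of $k$ consecutive integers but spreads $B$ out with spacing $k-1$, so that at every relevant translation exactly one point $b_{j+1}$ of $B$ lies in the interior of $A$'s range, while $b_1,\ldots,b_j$ are far to the left and $b_{j+2},\ldots,b_l$ far to the right. The preference lists then have a rigid block structure, and Lemma~\ref{blockpreference} shows that \emph{every} efficient (hence every optimal) matching at such a translation must use exactly the set $\{a_1,\ldots,a_j,a_i,a_{k-l+j+2},\ldots,a_k\}$, where $a_i$ indexes the Voronoi cell currently occupied by $b_{j+1}$; this yields $l(k-l)$ translations with pairwise distinct matched subsets. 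Since two matchings that are simultaneously optimal on a common open set must match the same subset of $A$ (remark after Lemma~\ref{lemdirections}), these translations lie in pairwise distinct regions. This last point --- that distinct matched subsets certify distinct regions only if one controls \emph{all} optimal matchings at each translation, not just exhibits one --- is also glossed over in your write-up. To salvage your approach, replace your 1D block by one in which only one point of $B_0$ at a time interacts with the interior of $A_0$; otherwise you would need a genuinely new argument for the dense-on-dense case.
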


\begin{proof}
We describe first a construction similar to the one used for the lower bound in Rote~\cite{roteNote1}. 
Let $l<k$ be two integer parameters.
Let~$A$ be the set of $k$ points on the line with coordinates 
$$
(l-1)(k-1),(l-1)(k-1)+1,\dots,l(k-1),
$$ 
and let $B$ be the set of $l$ points with coordinates 
$$0,k-1,2(k-1),\dots,(l-1)(k-1),$$
 as depicted in Figure~\ref{fig:lowerboundexample1}. 
Enumerate the points of $A$ as $a_1,\ldots, a_k$ and those of $B$ as $b_1,\ldots, b_l$ in their left-to-right order.
Note that, for every point in $B$, the $l$ closest points of $A$ are the~$l$ leftmost points $a_1,\ldots,a_l$ (in this order).
We start moving $B$ to the right.
During the motion, exactly one point $b_j$ of $B$ lies ``inside'' $A$ 
(until all the points of $B$ have crossed $A$), and 
$b_j$ traverses all the Voronoi regions of $a_1,\ldots, a_k$ in this order.
Fix $1\le j\le l-1$, $j+1\le i\le k-l+j+1$, 
and consider the instance of the motion 
when $b_{j+1}$ crosses the Voronoi region of $a_i$.
At this translation $t$, the $j$ closest neighbors in $A$
of every $b_{j'}$, ${j'}\le j$, are the $j$ leftmost points $a_1,\ldots, a_j$, and the $l-j-1$
closest neighbors of every $b_{j'}$, ${j'}\ge j+2$, are the $l-j-1$
rightmost points $a_{k-l+j+2},\ldots,a_k$. 
By Lemma~\ref{blockpreference}, in an efficient matching of $B+t$ to $A$,
the matched subset of $A$ is $\{a_1,\ldots, a_j,a_i,a_{k-l+j+2},\ldots,a_k\}$.
We thus obtain at least $l(k-l)$ distinct optimal matchings.

In order to construct the two-dimensional instance, we assume for simplicity that $m=2l$ and $n=2k$ 
are even, and construct the following copies of the sets $A$ and $B$. 
\begin{align*}
A_1 &= \{ (lk-l-k+t,0) \mid t \in \{1,\ldots,k\}  \},   \\
 A_2 &= \{ (0,lk-l-k+t) \mid t \in \{1,\ldots,k\}  \},   \\
B_1 &= \{ ((t-1)(k-1),-2l(k-1))\mid t \in \{1,\ldots,l\}  \} ,\\
B_2 &= \{ (-2l(k-1),(t-1)(k-1))\mid t \in \{1,\ldots,l\}  \} ;
\end{align*}
see Figure~\ref{fig:lowerboundexample}.
Set $A=A_1\cup A_2$ and $B=B_1\cup B_2$.
For $s=1,2$, enumerate the elements of $A_s$ as $a_1^{(s)},\ldots,a_k^{(s)}$,
and those of $B_s$ as $b_1^{(s)},\ldots,b_l^{(s)}$, in increasing order of the varying coordinate.
We claim that for every choice of 
\begin{align*}
1&\le j_1,j_2\le l-1\\
j_1+1&\le i_1\le k-l+j_1+1,\\
j_2+1&\le i_2\le k-l+j_2+1,
\end{align*}
there exists a translation $t=(t_1,t_2)$, at which the elements of $B_s+t$ are matched
to the subset $\{a_1^{(s)},\ldots,a_{j_s}^{(s)},a_{i_s}^{(s)},a_{k-l+j_s+2}^{(s)},\ldots,a_k^{(s)}\}$
of $A_s$, for $s=1,2$.
Indeed, take $t_1$ to be any horizontal translation at which $B_1$ is matched to
$\{a_1^{(1)},\ldots,a_{j_1}^{(1)},a_{i_1}^{(1)},a_{k-l+j_1+2}^{(1)},\ldots,a_k^{(1)}\}$,
as provided in the one-dimensional construction.
(Note that the fact that $B_1$ and $A_1$ are not collinear does not matter,
because the order of distances in $(B_1+t_1)\times A_1$ is the same as the order that would arise if we projected
$B_1+t_1$ onto the line ($x$-axis) containing $A_1$.)
We define $t_2$ in a fully symmetric manner. The claim holds because, for any such $t$, the distance
between any point $b\in B_1+t$ to any point of $A_2$ is larger than any of the distances
between $b$ and the points of $A_1$, and, symmetrically, the distance between any $b\in B_2+t$
to any point of $A_1$ is larger than any of its distances to the points of $A_2$.
Indeed, the largest translation of $B_1$ to the right,
until (the $x$-projection of) all its points cross $A_1$ is $l(k-1)$, and a 
symmetric claim holds for $B_2$ and $A_2$.
Hence, for any relevant $t=(t_1,t_2)$, the points of $B_2$ are at distance $2l(k-1)-t_1$
to the left of  the $y$-axis, so their distances from the points of $A_1$ are at least 
$$
2l(k-1)-t_1+l(k-1)-(k-1)=(3l-1)(k-1)-t_1;
$$
see the right part of Figure~\ref{fig:lowerboundexample}.
On the other hand, the largest distance of a point of $B_2+t$ from the points of $A_2$ is 
at most (again, see Figure~\ref{fig:lowerboundexample}) $\sqrt{(2l(k-1)-t_1)^2+(l(k-1))^2}$.
Since $t_1\le l(k-1)$, straightforward calculation shows that
$\sqrt{(2l(k-1)-t_1)^2+(l(k-1))^2}< (3l-1)(k-1)-t_1$,
provided that $l$ is at least some small absolute constant.
This establishes the claim, and shows that the number of efficient matchings in this case is at least 
$$
(l(k-l))^2=\frac{1}{4}m^2(n-m)^2,
$$
as asserted.
The example can be easily perturbed such that $A \cup B$ is in general position. \qed
\end{proof}

\section{The partial-matching RMS distance under translation} 
\label{sec:pmalg}

We now concentrate on the algorithmic problem of computing, in polynomial time,
a local minimum of the partial-matching RMS distance under translation.
Before going into the implementation details, we describe the main ideas of the algorithm.
\subsection{ \bf The high-level algorithm}
We ``home in'' on a local minimum of $F(t)$ by maintaining a vertical slab $I$ in the plane that is known to contain such
a local minimum in its interior, and by repeatedly shrinking it until we obtain a slab $I^*$
that does not contain any vertex of $\D_{B,A}$. That is, any (vertical) line contained in $I^*$
intersects the same sequence of regions, and, by Theorem~\ref{thm:Rote}, the number
of these regions is $O(nm)$. 
We then find an optimal partial matching assignment in each region, applying the Hungarian algorithm
described in the next subsection, and the corresponding explicit (quadratic)
expression of $F(t)$, and search for a local minimum within each region. 

A major component of the algorithm is a procedure, that we call $\Pi_1(\ell)$,
which, for a given input line $\ell$, constructs the intersection of $\D_{B,A}$ with $\ell$, computes a 
global minimum $t^*$ of $F$ on $\ell$, and determines a side of $\ell$, in which
$F$ attains strictly smaller values than $F(t^*)$. 
If no such decrease is found
in the neighborhood of $t^*$ then it is a local minimum of $F$, and we stop.

We use this ``decision procedure'' as follows.
Suppose we have a current vertical slab $I$, bounded on the left by a line 
$\ell^-$ and on the right by a line $\ell^+$. We assume that $\Pi_1$ has 
been executed on $\ell^-$ and on $\ell^+$, and that we have determined that 
$F$ assumes smaller values than its global minimum on $\ell^-$ to the 
right of $\ell^-$, and that it assumes smaller values than its global minimum 
on $\ell^+$ to the left of $\ell^+$. 
As we argue below, this implies that $F$ has a local minimum in the interior of $I$. 
Let $\ell$ be some vertical line passing through~$I$. We run $\Pi_1$
on $\ell$. If it determines that $F$ attains smaller values to its left
(resp., to its right), we shrink $I$ to the slab bounded by $\ell^-$ and $\ell$
(resp., the slab bounded by $\ell$ and $\ell^+$).  
By what will be argued below, this ensures that the new slab also contains a local minimum of $F$ in its interior.

We note that by restricting the problem to a line $\ell$ in the decision procedure described above we face a one-dimensional version of the problem. 
However, here it does not suffice to find a \emph{local} minimum over $\ell$. 
To see why, consider the following situation (as illustrated in Figure~\ref{fig:fig_localvsglobal}): Let $I$ be a slab of the form $x_1 \leq x \leq x_2$ in the plane, and let $q_1$ and $q_2$ be local minima of $F$ on $\{(x_1,s) \mid s\in \R \}$ (with $\frac{\partial F}{\partial x}(q_1)<0$) and on $\{(x_2,s) \mid s\in \R \}$ (with $\frac{\partial F}{\partial x}(q_2)>0$), respectively. 
Then we cannot conclude that~$I$ contains a local minimum. Indeed, it might be the case that the negative slope (in the $x$-direction) at $q_1$ points to a minimal point to the right of $x_2$, and the positive slope at $q_2$ points to a minimal point to the left of $x_1$, while the slab itself does not contain any local minimum. 
However, if $q_1$ and $q_2$ are \emph{global} minima, and the signs of the derivatives are as before, a local minimum must exist within the slab.
Indeed, 
it is easily checked that $I$ contains a minimum point of $F$ (restricted to $I$), simply because $F$ tends to $+ \infty$ as $|y| \to \infty$.
The conditions on the derivatives at $q_1,q_2$ indicate that this minimum is not attained on the boundary of $I$, and thus it must be a local minimum.
The description so far has assumed that the slab is bounded on the left and on the right, but the argument works equally well for semi-unbounded slabs. 

Nothing has so far been said about the concrete choice of the ``middle'' line~$\ell$. 
This will be spelled out in the detailed description of the algorithm, which now follows.

 \begin{figure}[ht]
 \begin{center}
 \includegraphics{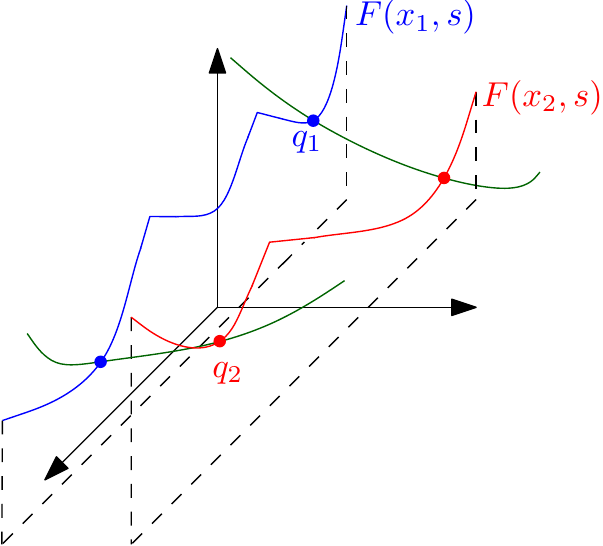}
 \caption{The function $F$ restricted to a slab. The graphs of $F$ over the two lines bounding the slab are highlighted, with two boundary \emph{local} minima $q_1$ and $q_2$ satisfying $\frac{\partial F}{\partial x}(q_1)<0$, $\frac{\partial F}{\partial x}(q_2)>0$. However, there is no local minimum inside the slab.}
 \label{fig:fig_localvsglobal}
 \end{center}
 \end{figure}

To initialize the slab, we choose an arbitrary \emph{horizontal} line $\lambda$,
and run $\Pi_1$ on $\lambda$, to find the sequence $S$ of its intersection 
points with the edges of $\D_{B,A}$.  We run a binary search through $S$, 
where at each step we execute $\Pi_1$ on the vertical line through the 
current point. When the search terminates, we have a vertical slab $I_0$ 
whose intersection with $\lambda$ is contained in a single region~$\sigma_0$ 
of~$\D_{B,A}$.  
(Note that $I_0$ might be semi-unbounded, if it lies to the left of the leftmost intersection, or to the right of the rightmost intersection, of $\lambda$ with the edges of~$\D_{B,A}$.)

\begin{figure}[ht]
\begin{center}
\includegraphics{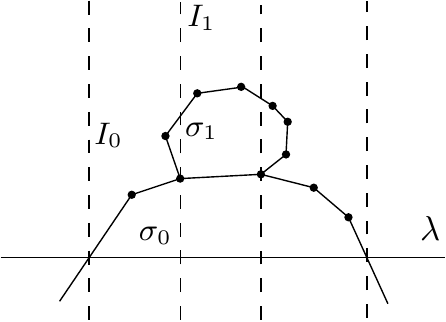}

	\caption{Shrinking the slab from $I_0$ to $I_1$.}
	\label{fig:fig_searchCell}
\end{center}
\end{figure}

After this initialization, we find  the region $\sigma_1$ that lies directly above $\sigma_0$ and 
that the final slab $I^*$ should cross\footnote{Since we only seek a local minimum of $F$, $I^*$ is not unique in general. When we speak of \emph{the} final slab, we simply mean the one produced by our procedure.}. 
In general, there are possibly many regions that lie above $\sigma_0$, but fortunately, by Lemma~\ref{lem:structure}(\ref{lem_bound_edges}),
their number is only at most $m(n-m)$.

To find $\sigma_1$, we compute the boundary of $\sigma_0$; this is done similarly to the execution of $\Pi_1$ (see details in Subsection~\ref{sub:boundary}). 
Once we have explored the boundary of $\sigma_0$, we take the sequence of all vertices of $\sigma_0$,
and run a $\Pi_1$-guided binary search on the vertical lines passing through them, exactly as we did with
the vertices of $S$, to shrink $I_0$ into a slab $I_1$,
so that the top part of the intersection of $\sigma_0$ with $I_1$ is a (portion of a) single edge.  
This allows us to determine $\sigma_1$,
which is the region lying on the other (higher) side of this edge.
See Figure~\ref{fig:fig_searchCell} for an illustration.
A symmetric variant of this procedure will find the region lying directly below
$\sigma_0$ in the final slab.

We repeat the previous step to find the entire stack of $O(nm)$ regions that~$I^*$ crosses,
where each step shrinks the current slab and then crosses to the next region in 
the stack. Once this is completed, we find a local minimum within~$I^*$ as
explained above. 
(The reader should keep in mind that the procedure can, and will, stop at any time when the line on which $\Pi_1$ is run is found to contain a local minimum of $F$.)

\subsection{ \bf Partial matching at a fixed translation: The Hungarian algorithm}
The Hungarian method, developed by Kuhn in 1955 \cite{Kuh55}, is an efficient procedure
for computing a perfect maximum weight (or, for us, minimum weight) bipartite matching between
two sets $A, B$ of equal size $m$, with running time $O(m^4)$, which has been improved to $O(m^3)$ by Edmond and Karp \cite{EK}. 
The original algorithm proceeds iteratively, starting with an empty set $M_0$ of matched pairs. In the $i$-th iteration it takes the
current set $M_{i-1}$ of $i-1$ matched pairs, and transforms it into a set $M_i$ with $i$
matched pairs, until it obtains the desired optimal perfect matching with $m$ pairs.

Let us sketch the technique for  minimum-weight matching, which is the one
we want. The $i$-th iteration is implemented as follows. Define $D$ to be the (bipartite) 
\emph{directed} graph, with vertex set $A\cup B$, whose edges are the edges of $M_{i-1}$ 
directed from $B$ to $A$, and the edges of $(A\times B)\setminus M_{i-1}$, directed 
from $A$ to $B$. We look for an augmenting path $p$ that starts at $B$ and ends at $A$,
of minimum weight, and we set $M_i := M_{i-1} \triangle p$ (here $\triangle$ denotes symmetric difference).

Ramshaw and Tarjan~\cite{tarjan} proposed and analyzed an adaptation of the Hungarian Method to unbalanced bipartite graphs. 
They used a modification of Dijkstra's algorithm, as described in \cite{Fredman}, for finding the augmenting paths. 
The analysis of the careful implementation of the Hungarian method that they propose yields a running time of $O(n m^2)$ for graphs with vertex sets of sizes $m$ and $n$, respectively, assuming $m \leq n$.

Hence, to recap, given a translation $t$, we can compute $M(B+t,A)$ by the above
algorithm, where the weight of an edge $(a,b)\in A\times B$ is $\|b+t-a\|^2$.
We denote this procedure as $\Pi_0(t)$; its output is the set of matched pairs,
or, in our notation, the injective assignment $\pi: B \to A$.

\subsection{\bf Computing the boundary of $\sigma_0$} \label{sub:boundary}
Let $\sigma_0$ be an open region of $\D_{B,A}$, and let $A_0 \subseteq A$ be the set of the 
$m$ matched points of $A$, for translations $t\in \sigma_0$. 
$A_0$ can be computed by picking some translation $t_0$ (interior to) $\sigma_0$, and then 
by running $\Pi_0(t_0)$, for finding an optimal matching $M_0$ for the translation $t_0$ in time $O(nm^2)$.
By Lemma~\ref{lemdirections}, we know that there are $O(nm)$ possible directions for the bisectors forming $\bd\sigma_0$. 
Moreover, when we cross an edge of $\sigma_0$, an optimal matching $M_1$ that replaces $M_0$ is obtained from a collection of pairwise-disjoint alternating paths (and possibly also cycles), where in each path we replace the edges of $M_0$ in the path by the (same number of) edges of~$M_1$. 
As seen in the proof of Lemma~\ref{lemdirections}, if $\gamma$ is a cycle, then $d_{\gamma}=0$ and therefore whether it increases, preserves or decreases the cost of a matching is independent of the translation $t$. 
Thus, we can assume that the symmetric difference of $M_0$ and $M_1$ consists only of paths by flipping in $M_1$ the cycles in the difference (obtaining a matching with the same cost as $M_1$ everywhere).
The subset $A_1$ of the $m$ matched points of $A$ in $M_1$ is obtained by replacing, for each of these paths, the starting point $a_i$ of the path (which belongs to $A_0$) by the terminal point $a_j$ (which belongs to $A_1$). 
As shown in  Lemma~\ref{lemdirections}, this implies 
that the bisector through which we have crossed from $\sigma_0$ to the neighbor region~$\sigma_1$ must be perpendicular to $a_i-a_j$, for all pairs $(a_i,a_j)$.
Moreover, assuming general position, and specifically that there are no two distinct pairs of points $\{a_p,a_r\},\{ a_q,a_s\} \subset A$ such that $a_r-a_p$ and $a_s-a_q$ are parallel, it follows that each edge of $\DD$, and specifically of $\bd\sigma_0$, corresponds to a \emph{single} such alternating path, and to a single replacement pair $(a_i,a_j)$. 
In other words, under the above general position assumption, over each edge of $\sigma_0$ only one point $a_i\in{A_0}$ exits the optimal matching and another point $a_j \in{A \setminus A_0}$ enters in it. 
(Note however that the edges of the matching can change globally.)
Therefore, we can construct $\bd\sigma_0$ easily and efficiently in the following manner. For each of the $m$ points $a_i\in{A_0}$, we replace it by one of the $n-m$ points $a_j\in{A \setminus A_0}$. For each such replacement we compute the new optimal (perfect) matching $M_1$ between $B$ and 
$A_1=(A_0\setminus \{a_i\})\cup \{a_j\}$ (recall that, by Lemma~\ref{lem4_sameSize1Match}, once $A_1$ is fixed, the matching 
$M_1$ is independent of the translation, so it can be computed at any translation, e.g., at $t_0$). 
We then find the bisector, by comparing the expression in the right-hand side of (\ref{eq:Rote}) for the new matching $M_1$ and for the optimal matching $M_0$ in $\sigma_0$. This provides us with a total of $O(nm)$ potential bisectors. 
We now obtain $\sigma_0$ as the intersection of the $O(nm)$ halfplanes, bounded by these bisectors and containing $t_0$. 
This takes $O(nm\log{(nm)})$ additional time, which is dominated by the time used for the computation of the optimal matchings in $\sigma_0$ and across its potential edges.
Note that for each edge on $\bd\sigma_0$ we also know an optimal assignment on its other side. 

If the points are not in general position, it is not obvious that the edges of $\sigma_0$ can be constructed using the same procedure. 
The difference here is that the set matched in a neighboring region might differ in more than one point from~$A_0$. 
This happens only if more than one (vertex independent) paths in the symmetric difference of the corresponding matchings vanish on the same line $\mu$. 
Nevertheless, we are fine in such a case, because each of the above paths could be flipped independently, inducing a valid matching, with the same cost as $\sigma_0$ on $\mu$, whose matched set differs from $A_0$ by only one element. 
More precisely, we have that if $A_1 \subseteq A$ is the set matched by an optimal matching $M_1$ in a region $\sigma_1$ sharing an edge $e$ with $\sigma_0$, then there is at least one pair of points  $a_j \in A_1 \setminus A_0$ and  $a_i \in A_0 \setminus A_1$ such that the bisector between~$\sigma_0$ and the best matching $\tau$ using ($A_0 \setminus a_i) \cup a_j$ supports $e$.    
Since the bisector of $\sigma_0$ with $\tau$ is one of the potential bisectors we construct, each edge of $ \sigma_0$ is discovered by the procedure and, hence, the computed boundary of $\sigma_0$ is correct. 

\subsection{\bf Computing the optimal matching beyond an edge}

As argued above, if the point sets are in general position, we can easily compute an optimal matching for a neighboring region by flipping, in the current optimal matching, the alternating path inducing the common edge. 
However, in degenerate situations, we cannot directly infer any optimal matching on the other side if several paths induce the same edge.
To compute the new matching, we can apply the Hungarian algorithm to a translation infinitesimally beyond the edge. 
That is, we take the midpoint $\bar{e}$ of the edge $e$ and its outer normal vector $s$ with respect to $\sigma_0$, and compute an optimal matching for a translation $\bar{e}+\varepsilon s$ for $\varepsilon >0$ arbitrarily small. 
In order to do it, we compute the sums and perform the comparisons required by the algorithm regarding the squared distances $\|b+\bar{e}+\varepsilon s - a \|^2$, for every $a \in A$ and $b \in B$, as polynomials of degree two in $\varepsilon$, evaluated in the limit $\varepsilon\,{\scriptstyle\searrow}\,0$.  

Fortunately, the overhead incurred by these additional computations is dominated by the running time of our procedures for points in general position, as the following analysis shows.

\subsection{\bf Solving $\Pi_1(\ell)$}
Let $\ell$ be a given line in $\R^2$; without loss of generality assume $\ell$ to be vertical.
We start at some arbitrary point $t_0\in\ell$,
run $\Pi_0(t_0)$, and obtain an optimal injective assignment $\pi_0$ for the partial 
matching between $B+t_0$ and~$A$. We now proceed from $t_0$ upwards along $\ell$, and 
seek the intersection of this ray with the boundary of the region $\sigma_0$ of $\D_{B,A}$
that contains $t_0$. 
Finding this intersection will also identify the next region of the subdivision
that $\ell$ crosses into (as noted above, this identification is cheap in general position, but requires some work in degenerate cases), and we will continue in this manner, finding all the regions of $\D_{B,A}$
that the upper ray of $\ell$ crosses. 
In a fully symmetric manner, we find the regions
crossed by the lower ray from $t_0$, altogether $O(nm)$ regions, by Theorem~\ref{thm:Rote}.

To find the intersection $t^*$ of the upper ray of $\ell$ with $\bd\sigma_0$, we apply a simplified variant of the procedure for computing $\bd\sigma_0$. That is, we construct the $O(nm)$ potential bisectors between $\sigma_0$ and the neighboring regions, exactly as before. 
(Note that, as argued above for constructing $\partial \sigma_0$, these bisectors determine the boundary of the region even in degenerate cases.)
The point~$t^*$ is then the lowest point of intersection of $\ell$ with all these bisectors lying above~$t_0$. We repeat this process for each new region that we encounter, and do the same
in the opposite direction, along the lower ray from $t_0$, until we find all the regions
of $\D_{B,A}$ crossed by $\ell$.

The number of regions is $O(nm)$. We compute the explicit expression for $F(t)$ in each 
of them, and thereby find the global minimum $\bar{t}$ of $F$ along $\ell$. Finally, we 
compute $\frac{\bd F}{\bd x}(\bar{t})$ (which is a linear expression in $t$, readily obtained
from the explicit quadratic expression for $F$ in the neighborhood of $\bar{t}$). We note that $\bar{t}$ cannot be a breakpoint of $F$ (that is, lie on an edge of $\DD$), since a local minimum in $\bar{t}$ implies that $F(t)$ in both neighboring regions is decreasing towards $\bar{t}$, but no bisecting edge can pass through such a point.
If $\frac{\bd F}{\bd x}(\bar{t})$ is negative (resp., positive), we conclude that $F$ attains lower values than 
its minimum on $\ell$ to the right (resp., left) of $\ell$, and we report this direction.
If the derivative is~$0$, we have found a local minimum of $F$ and we stop the whole algorithm.

The cost of $\Pi_1(\ell)$ is $O(nm \cdot nm \cdot m^3)=O(n^2m^5)$, as we encounter $O(nm)$ regions along $\ell$, and for each of them we examine $O(nm)$ potential bisectors, each of which is obtained by running $\Pi_0$, in $O(m^3)$ time. 
The additional time to construct an optimal matching in each region if the points are not in general position is $O(nm^2)$ per region, for a total of $O(nm^2 \cdot nm)=O(n^2m^3)$, and it is thus dominated by the previous bound. 

\subsection{\bf Running time of the algorithm}\label{subsec:pmalg}
The running time of the whole algorithm is dominated by the cost of constructing the $O(nm)$ regions that the final slab $I^*$ crosses. Each region is constructed in $O(nm^4)$ time, after which we run a $\Pi_1$-guided binary search through its vertices, in time $O(n^2m^5\log{(nm)})$. Multiplying by the number of regions, we get a total running time of $O(n^3m^6\log{(nm)})=O(n^3m^6\log{n})$. 

If the point sets are not in general position, we might need to recompute an optimal matching from scratch when we enter a new region in the final slab. This amounts to $O(nm)$ computations requiring $O(nm^2)$ time each and, thus, it does not increase the total running time of the algorithm.

In summary, we have the following main result of this section.

\begin{theorem}\label{thmMatchingLocMin}
Given two finite point sets $A,B$ in $\R^2$, with $n=|A|>|B|=m$, a local minimum of the partial-matching RMS distance under translation can be computed in $O(n^3m^6\log{n})$ time.
\end{theorem}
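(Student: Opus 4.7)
The plan is to formalize the algorithm assembled in the preceding subsections into a single pipeline and read off its running time. The two atomic operations are $\Pi_0(t)$, which invokes the unbalanced Hungarian algorithm of Ramshaw--Tarjan to compute an optimal assignment at a fixed translation $t$ in $O(nm^2)$ time, and the decision procedure $\Pi_1(\ell)$ on a vertical line $\ell$. For $\Pi_1$, I would pick $t_0\in\ell$, call $\Pi_0(t_0)$, and sweep along $\ell$: inside the current region I enumerate the $O(nm)$ candidate bisectors prescribed by Lemma~\ref{lemdirections}, each obtained by swapping one matched point for one unmatched point and solving a balanced Hungarian instance on $A_1=(A_0\setminus\{a_i\})\cup\{a_j\}$ in $O(m^3)$ time; the lowest such intersection above the current point is the exit from the region. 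By Theorem~\ref{thm:Rote} only $O(nm)$ regions are visited, so $\Pi_1(\ell)$ costs $O(n^2m^5)$. Maintaining the closed form (\ref{eq:Rote}) in each region yields the global minimum $\bar t$ of $F|_\ell$ together with $\partial F/\partial x(\bar t)$; vanishing of this derivative means $\bar t$ is already a local minimum of $F$ and the algorithm halts.

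I would then maintain a vertical slab $I=[\ell^-,\ell^+]$ satisfying the invariant that the global minimum of $F|_{\ell^-}$ has $\partial F/\partial x<0$ and that of $F|_{\ell^+}$ has $\partial F/\partial x>0$. Under this invariant, $F$ is coercive on $\bar I$, because in every region of $\DD$ it is a quadratic in $t$ whose leading term $m\|t\|^2$ forces $F\to+\infty$ as $|y|\to\infty$; hence the minimum of $F$ over $\bar I$ is attained, and the two sign conditions push it off $\partial I$, making it a local minimum of $F$ in the plane. The slab is initialized by a $\Pi_1$-guided binary search over the $O(nm)$ intersections of a horizontal line $\lambda$ with $\DD$, producing $I_0$ whose trace on $\lambda$ lies inside a single region $\sigma_0$. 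Each shrinking step then computes $\partial\sigma_0$ by the procedure of Subsection~\ref{sub:boundary} in $O(nm^4)$ time (\emph{i.e.} $O(nm)$ candidate bisectors at $O(m^3)$ each, plus a halfplane intersection that is subsumed), runs a $\Pi_1$-guided binary search through the $O(nm)$ vertices of $\sigma_0$ until the top (resp.\ bottom) trace of $\sigma_0$ in the slab is a single edge, and crosses into the neighboring region, using one extra $\Pi_0$ call at an infinitesimal offset to resolve possible degeneracies.

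The main obstacle to rigor is verifying that each shrinking step preserves the invariant, especially for non-generic configurations. I would argue this by noting that $\Pi_1(\ell)$ on the middle probe line $\ell$ produces a signed direction of decrease; whichever half of the current slab one retains, the outer boundary keeps its original sign and the new inner boundary $\ell$ carries the opposite sign by construction. For the final time bound, Theorem~\ref{thm:Rote} caps the number of regions in the final slab $I^*$ at $O(nm)$; each contributes $O(nm^4)$ for boundary construction plus $O(\log n)$ probes of $\Pi_1$ at $O(n^2m^5)$ each, totalling $O(n^2m^5\log n)$ per region and $O(n^3m^6\log n)$ overall. The degeneracy handling adds only $O(nm^2)$ per region and is absorbed. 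Once $I^*$ is built, $F|_{I^*}$ is known region by region in closed form, and the invariant guarantees that its minimum on $\bar I^*$ is attained at an interior point, which is returned as the desired local minimum of $F$, establishing the claimed $O(n^3m^6\log n)$ bound.
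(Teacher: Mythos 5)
Your proposal is correct and follows essentially the same route as the paper: the same two primitives $\Pi_0$ and $\Pi_1$, the same slab invariant based on \emph{global} minima on the bounding lines (with the coercivity argument ruling out boundary minima), the same region-by-region construction of the final slab via boundary computation and $\Pi_1$-guided binary search, and the same cost accounting $O(nm)\cdot O(n^2m^5\log n)=O(n^3m^6\log n)$. The degeneracy handling via an infinitesimal offset also matches the paper's treatment.
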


\noindent{\em Remark.} 
Returning to the discussion in the introduction concerning heuristics for finding the global minimum,
we can apply similar heuristics to our algorithm.
For example, having an initial translation $t_0$ that we expect (or hope) to be close to the global minimum, 
we can enclose $t_0$ by an initial, sufficiently narrow, slab $I_0$, and run the preceding algorithm starting with $I_0$ 
(after verifying that it contains a local minimum), as described above.
However, if we want to ensure that the global minimum is obtained, we apply the modified approach
in the following subsection.

\subsection{\bf Finding a global minimum}

Using the techniques from the previous subsection, we can easily devise an algorithm for constructing the entire subdivision $\D_{B,A}$, 
by starting at an arbitrary translation, constructing the region that contains it, and continue constructing the neighboring regions. 
Computing an optimal matching in a region $\sigma$ (if it needs to be computed from scratch) takes $O(nm^2)$ time and computing 
the boundary of $\sigma$ takes $O(nm^4)$ time, as explained above. 
In other words, the overall time for constructing $\D_{B,A}$, including an optimal assignment over each of its features,
is $O(nm^4)$ times its complexity, namely $O(n^3m^{7.5}(e\ln m+e)^m)$.

Once $\D_{B,A}$ is constructed, a minimum of $F$ in each region can be computed in time proportional to its complexity, 
since the minimum might be attained at a vertex of the region or at the orthogonal projection of the minimum of the corresponding 
parabola on an edge of the region. 
Therefore, the global minimum of $F$ can be obtained, by traversing all features of $\D_{B,A}$, in time proportional to its complexity. 

Overall, we obtain the following result.
\begin{theorem}\label{global}
Given two finite point sets $A,B$ in $\R^2$, with $n=|A|>|B|=m$, the subdivision $\D_{B,A}$ can be constructed in $O( n^3  m^{7.5}  (e \ln m +e )^m)$ time. 
The global minimum of the partial-matching RMS distance under translation can then be computed in additional $O( n^3  m^{7.5}  (e \ln m +e )^m)$ time. 
\end{theorem}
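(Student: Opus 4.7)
The plan is to construct $\D_{B,A}$ by a single traversal of its regions and then sweep over all of its faces to locate the global minimum of $F$. For the construction phase, I would pick an arbitrary point $t_0\in\R^2$, compute an optimal matching $\pi_0$ at $t_0$ via $\Pi_0(t_0)$ in $O(nm^2)$ time, and then apply the boundary-construction subroutine of Section~\ref{sub:boundary} to the region $\sigma_0$ containing $t_0$. With $A_0:=\pi_0(B)$, this subroutine enumerates all $O(nm)$ candidate replacement pairs $(a_i,a_j)\in A_0\times(A\setminus A_0)$, computes for each the balanced optimal matching between $B$ and $(A_0\setminus\{a_i\})\cup\{a_j\}$ in $O(m^3)$ time by the Hungarian method, and intersects the resulting $O(nm)$ half-planes to recover $\bd\sigma_0$ together with an optimal assignment for every neighboring region across every edge. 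Each newly discovered neighbor is pushed onto a queue paired with its matching, and the process is iterated, marking regions as they are expanded so that each is processed exactly once.

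The per-region cost is dominated by the $O(nm)\cdot O(m^3)=O(nm^4)$ balanced matching computations used to form the bisectors. Multiplying by the upper bound $O(n^2m^{3.5}(e\ln m+e)^m)$ on the number of regions of $\D_{B,A}$ from Theorem~\ref{thm:ourThm} yields the announced $O(n^3m^{7.5}(e\ln m+e)^m)$ total, and at the end of this phase an explicit optimal assignment is stored with every face. For the minimum-finding phase I would iterate over the faces: on a face $\sigma$ with associated optimal assignment $\pi_\sigma$, Equation~\eqref{eq:Rote} expresses $F$ on $\sigma$ as the strictly convex paraboloid $c_{\pi_\sigma}+\scalprod{t}{d_{\pi_\sigma}}+m\|t\|^2$, whose apex sits at $t^*_\sigma=-d_{\pi_\sigma}/(2m)$. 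The restricted minimum of $F$ over $\overline{\sigma}$ is therefore attained either at $t^*_\sigma$ (when it lies in $\overline{\sigma}$), at the orthogonal projection of $t^*_\sigma$ onto the relative interior of some bounding edge, or at a vertex of $\sigma$, and each candidate can be evaluated in $O(1)$ time. Summing over all faces, this sweep runs in time proportional to the combinatorial complexity of $\D_{B,A}$ and is therefore absorbed by the construction bound.

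I expect the main obstacle to be handling degenerate configurations cleanly, where several vertex-disjoint alternating paths vanish on the same bisecting edge and the matched subset of $A$ on the far side of that edge differs from $A_0$ in more than one element. For such edges a single flip no longer produces the correct neighboring matching, and one must fall back on an extra Hungarian call on an infinitesimally perturbed translation $\bar{e}+\varepsilon s$, evaluated as polynomials in $\varepsilon$ in the limit $\varepsilon\to 0^+$, exactly as described in Section~\ref{sub:boundary}. Verifying that this extra cost of $O(nm^2)$ per edge is dominated by the $O(nm^4)$ budget already allocated for boundary construction, and that the queue-based traversal never revisits a region, is what ties the argument together and delivers the two bounds asserted by the theorem.
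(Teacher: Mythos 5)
Your proposal is correct and follows essentially the same route as the paper: a queue-based traversal that constructs each region's boundary via the $O(nm)$ candidate replacement pairs at $O(m^3)$ Hungarian cost each (so $O(nm^4)$ per region, multiplied by the Theorem~\ref{thm:ourThm} bound on the number of regions), followed by a face-by-face sweep testing the paraboloid apex, its edge projections, and the vertices. The treatment of degenerate edges by an infinitesimally perturbed Hungarian call also matches the paper's handling.
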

 In particular, when $m$ is a constant the global minimum (and the entire $\D_{B,A}$)
 can be computed in $O(n^3)$ time.

\section{The Hausdorff RMS distance under translation}
\label{sec:haus}

In this section, we turn to the simpler problem involving the Hausdorff RMS distance,
and present efficient algorithms for computing a local minimum of the RMS function
in one and two dimensions.
We begin with the simpler one-dimensional case.

\subsection{\bf The one-dimensional case.}

\paragraph{The function $N_A$.}
Let $A=\{a_1,\ldots,a_n\}\subset\R$, and recall that, for a point $t\in\R$, $N_A(t)$ denotes a nearest neighbor of $t$ in $A$. 
Then $N_A$ is a step function with the following structure. Assume that the 
elements of $A$ are sorted as $a_1<a_2<\cdots<a_n$, and put $\mu_i=\frac{a_i+a_{i+1}}{2}$, 
for $i=1,\ldots,n-1$. Then, breaking ties in favor of the larger point, we have:
\[
N_A(t) = \left \{ 
\begin{array}{l l}
a_1 & \quad \text{for $t<\mu_1$}\\
a_i & \quad \text{for $i=2,\ldots,n-1$ and $\mu_{i-1} \leq t < \mu_i$}\\
a_n & \quad \text{for $t \geq \mu_{n-1}$.}
\end{array} \right.
\]
See Figure~\ref{fig:fig_Nb} for an illustration. 
We put $N_{A,i}(t):=N_A(b_i+t)$, the nearest 
neighbor in $A$ of $b_i+t$, for $b_i\in{B}$, and $t \in{\R}$, for $i=1,\ldots, m$. 
The graph of each of these $m$ functions is just a copy of the graph of $N_A$, $x$-translated to the left by the respective $b_i$.
\begin{figure}[ht]
\begin{center}
\includegraphics{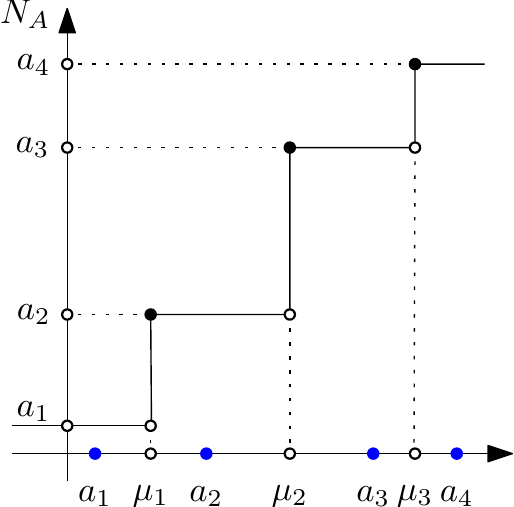}
\caption{The step function $N_{A}$ for a set of points on the line.}
\label{fig:fig_Nb}
\end{center}
\end{figure}

\paragraph{The one-dimensional unidirectional case.}

Denote $\h(B+t,A)$ as $r(t)$ for short, regarding $A$ and $B$ as fixed. 
We thus want to compute a (local) minimum of the function
$$
r(t) = \sum_{i=1}^m (b_{i} + t - N_{A,i}(t))^{2} .
$$
We observe that $r(t)$ is continuous and piecewise differentiable (except at 
the points of discontinuity of the step functions $N_{A,i}(t)$) and 
each of its pieces is a parabolic arc. For any $t$, which is not one of these 
singular points, also referred to as \emph{breakpoints}, the derivative of each of 
the step functions is 0.  Hence we have, for any non-singular local minimum $t$ of $r$,
\begin{equation}\label{eq:HausRMS}
r'(t) = 2\sum_{i=1}^m{( b_i + t - N_{A,i}(t))} = 0 .
\end{equation}
Clearly, for any given (non-singular) translation $t_0$, $r(t_0)$ and $r'(t_0)$ 
can be computed (from scratch) in $O(m\log{n})$ time, provided that the breakpoints
of $N_A$ are given in a sorted order. This will be the case where $A$ is given to us in 
sorted order; otherwise, another $O(n\log n)$ time for sorting $A$ is required.
This also holds for the left and right one-sided derivatives of $r(t_0)$, 
at a breakpoint $t_0$, denoted respectively as $r'(t_0)^-$ and $r'(t_0)^+$.

We note that a local minimum of $r(t)$ cannot occur at a singular point. Indeed, 
for the local minimum to occur at a breakpoint $t_{step}$, we must have 
$r'(t_{step})^- \leq 0$ and $r'(t_{step})^+ \geq 0$. However, referring to 
equation~(\ref{eq:HausRMS}), one easily verifies that the value of $r'(t)$ can 
only decrease at $t_{step}$, contradicting the above inequalities.

Another simple observation is that if there is an interval $I=[t_1,t_2]$, 
such that~$r'(t_1)^+<0$ and $r'(t_2)^->0$, then there exists a local 
minimum of $r(t)$ inside $I$. Our algorithm starts with a large interval with 
this property, and shrinks it repeatedly, while ensuring that it still contains a local minimum. 
At every step of the shrinking process, the number of 
breakpoints of $r(t)$ over $I$ reduces by (at least) half, and the process 
terminates when $I$ does not contain any breakpoints. 
At this point it is straightforward to calculate a local minimum by constructing the explicit representation of $r$ over $I$.

Assuming after relabeling that $b_1<\cdots<b_m$, set $t_1 =\mu_1-b_m$, and $t_2=\mu_{n-1}-b_1$. 
We start with $I=[t_1,t_2]$. It is easily checked that $r(t)$ has no breakpoints outside $I$, and it is 
in fact decreasing for $t<t_1$ and increasing for $t>t_2$. Thus $I$ contains
the global minimum of $r(t)$.

We next describe the procedure for shrinking $I$, i.e., computing a subinterval 
$I' \subset I$, such that the number of breakpoints of $N_{A,i}(t)$ over $I'$ is 
(approximately) half the number of its breakpoints over $I$, for $i=1,\ldots,m$, while maintaining 
the invariant that $r'(t)^+<0$ at the left endpoint of $I'$, and $r'(t)^->0$ 
at the right endpoint. Such a ``halving'' of $I$ is performed in a single 
iteration of the procedure, and since we start with $O(n)$ breakpoints for each function $N_{A,i}(t)$,
the algorithm executes $O(\log n)$ iterations.

The shrinking process is performed as follows.

\begin{enumerate}[(1)]
\item Each iteration starts with an interval $I=[t_1,t_2]$ such that $r'(t_1)^+<0$ 
and $r'(t_2)^->0$ (where the initial values of $t_1$ and $t_2$ are given above). 
We calculate, for each $i=1,\ldots,m$, the median step $\xi_i$ of $N_{A,i}$ among 
its steps within $I$, which can be done in $O(\log{n})$ time.
These $m$ median steps are thus found in total time $O(m\log{n})$. We sort them into a list $L=(\xi_1,\ldots,\xi_m)$.

\item We perform a binary search over $L$ for finding a local minimum of $r(t)$ 
between two consecutive elements of $L$. At each step of the search, with some 
value $t=\xi_k$, we compute $r'(\xi_k)^-$ and $r'(\xi_k)^+$ in $O(m\log{n})$ 
time; as noted earlier, there are only three possible cases:
	\begin{description}
	\item{(a)} If $r'(\xi_k)^->0$ and $r'(\xi_k)^+>0$, we go to the left, replacing $t_2$ by $\xi_k$.
	\item{(b)} If $r'(\xi_k)^-<0$ and $r'(\xi_k)^+<0$, we go to the right, replacing $t_1$ by $\xi_k$.
	\item{(c)} If $r'(\xi_k)^->0$ and $r'(\xi_k)^+<0$, it does not matter where to go---there 
	are local minima on both sides; we go to the left, say, resetting $t_2$ as in~(a).	
	\end{description}
In this way, we maintain our invariant. At the end of the binary search, we get 
an interval $I'=[\xi_j,\xi_{j+1}]$ with $r'(\xi_j)^+<0$ and $r'(\xi_{j+1})^->0$. 
The progress that we have made by passing from $I$ to $I'$ is that, for each step 
function $N_{A,i}$, we got rid of at least half of its steps within $I$: if 
$\xi_k \leq \xi_j$ (resp., $\xi_k \geq \xi_j$) then the leftmost (resp., rightmost)
half of the steps of $N_{A,i}$ within $I$ is discarded.
The running time of this step is $O(m\log n\log m)$ (there are $O(\log m)$ binary 
search steps, each taking $O(m\log n)$ time).

\item We keep shrinking $I$, until it contains no breakpoints (of any $N_{A,i}$). 
We then explicitly construct the graph of $r$ over $I$, and search for a local minimum of $r$ in constant time.
By the invariant, at least one such minimum will be found. 
The overall running time of step (3) is thus $O(m\log n)$.
\end{enumerate}
Hence, the overall running time of the algorithm  is 
$O( n+m\log^2{n}\log m)$.

\paragraph{An improved algorithm.} 
Step (2) of the preceding algorithm performs $O(\log{m})$ binary-search steps over 
the list $L$ of the median breakpoints of the step functions within $I$, in order to 
eliminate (at least) half of the breakpoints inside the interval, but we can get 
rid of a quarter of these breakpoints by replacing the median of $L$ by a {\em weighted} median
and performing just the first step of the search. 
Taking this approach, we keep iterating until the total number of breakpoints (from all $N_{A,i}$) within $I$ is $O(\max\{n,m\})$.
Since we start with $O(nm)$ breakpoints, it is required to perform $\log\min\{n,m\}$ iterations.
Omitting the rather standard details,
we obtain with this improvement the following result.

\begin{theorem} \label{th1dir1dim}
Let $A$, $B$ be two finite sets on the real line, with $|A|=n$ and $|B|=m$, and assume that the 
elements of $A$ are sorted as $a_1<a_2<\cdots<a_n$.
Then a local minimum of the unidirectional RMS distance under translation from $B$ to $A$ 
can be obtained in time 
$$
O\left(\min\left\{ n+m\log^2{n}\log m,~n\log n+m\log n\log{\min{\{n,m\}}}\right\}\right).
$$
\end{theorem}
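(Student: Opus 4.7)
The plan is to verify correctness of the two algorithms already described in the text and to carefully bound their running times, treating the two bounds in the $\min$ separately.

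First I would confirm that the invariant ``$I=[t_1,t_2]$ contains a local minimum of $r$'' is maintained throughout. Continuity of $r$, the fact that $r(t)\to+\infty$ as $|t|\to\infty$, and the sign conditions $r'(t_1)^+<0$, $r'(t_2)^->0$ force a local minimum inside $I$; together with the observation already noted in the excerpt that $r'$ can only jump downward across a breakpoint (so no local minimum sits on one), the case analysis (a)--(c) of the binary search preserves the invariant. The initial choice $t_1=\mu_1-b_m$, $t_2=\mu_{n-1}-b_1$ contains all breakpoints of every $N_{A,i}$, and a direct evaluation of $r'(t)=2\sum_i(b_i+t-N_{A,i}(t))$ at these endpoints gives the required signs, since beyond them every $N_{A,i}$ is constant.

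Next I would analyze a single shrinking iteration of the basic algorithm. Computing the median breakpoint $\xi_i$ of each $N_{A,i}$ inside the current $I$ costs $O(\log n)$ by binary search over the sorted midpoints of consecutive points of $A$, hence $O(m\log n)$ in total; sorting the $\xi_i$'s into $L$ takes $O(m\log m)$. The binary search over $L$ makes $O(\log m)$ steps, and each step evaluates $r'(\xi_k)^\pm$ in $O(m\log n)$ time, yielding $O(m\log n\log m)$ per iteration. Because the resulting $I'$ is wedged between two consecutive values of $L$, at least half of the breakpoints of every individual $N_{A,i}$ present in $I$ are discarded; starting from $O(n)$ such breakpoints per function, $O(\log n)$ iterations suffice. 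With an extra $O(n)$ for reading $A$ and an $O(m\log n)$ finishing step, the total is $O(n+m\log^2 n\log m)$.

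For the improved bound I would replace the median of $L$ by a weighted median, assigning to each $\xi_i$ the number $w_i$ of breakpoints of $N_{A,i}$ still inside $I$. A single evaluation of $r'$ at this weighted median together with a single sign-based side choice eliminates at least a constant fraction (at least $1/4$) of the \emph{total} breakpoint count $\sum_i w_i$, since going to either side of the weighted median discards half of the breakpoints of every $N_{A,i}$ whose median lies on the opposite side, and the weighted median guarantees that those functions carry at least half of the total weight. Iterating until the total drops to $O(\max\{n,m\})$ takes $O(\log\min\{n,m\})$ rounds starting from the worst-case $O(nm)$, with each round costing $O(m\log n)$ for the weighted median and the derivative evaluation. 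Once the total is $O(\max\{n,m\})$ the remaining breakpoints can be merged and scanned in $O(\max\{n,m\})$ additional time. Adding the $O(n\log n)$ needed to sort $A$ (if it is not given sorted) and to set up the auxiliary index structures yields $O(n\log n+m\log n\log\min\{n,m\})$, and taking the minimum of the two expressions proves the theorem.

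The main obstacle I expect is the bookkeeping needed to maintain, for every $N_{A,i}$, the pair of indices delimiting its breakpoints currently inside $I$, and to support weighted-median queries on the $\xi_i$'s without incurring extra logarithmic factors per iteration. Once these auxiliary structures are carefully described, the invariant-preservation argument and the constant-fraction decrease in the total breakpoint count reduce the analysis to the bounds above.
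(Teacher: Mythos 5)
Your proposal is correct and follows essentially the same route as the paper: the paper's ``proof'' of Theorem~\ref{th1dir1dim} is precisely the interval-shrinking algorithm with per-function median breakpoints and the weighted-median refinement (discarding a quarter of the total breakpoint count per round, hence $O(\log\min\{n,m\})$ rounds), whose ``rather standard details'' you have filled in faithfully, including the invariant preservation via the sign conditions $r'(t_1)^+<0$, $r'(t_2)^->0$ and the downward-jump property of $r'$ at breakpoints.
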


When $|A|=|B|=n$, the running time is simply $O(n\log^2n)$. When $B$ is much smaller than $A$, namely, when $|A|=n$ and $|B|=O(n/\log^2n)$, the running  
time is $O(n)$.

\paragraph{The one-dimensional bidirectional case.}
Simple extensions of the procedure given above apply to the two variants of
the minimum bidirectional Hausdorff RMS distance, as defined in the introduction. 
One feature to observe is that, for the $L_\infty$-bidirectional case, a local minimum can also arise at an intersection between two one-direction functions $\h(A,B+t)$ and $\h(B+t,A)$.
Nevertheless, it is easy to handle this new kind of breakpoints, and their number is proportional to the number of the other, standard breakpoints. 
Omitting the further fairly routine details of these extensions, we obtain:
\begin{theorem}\label{th2dir1dim}
Given two finite point sets $A$, $B$ on the real line, with $|A|=n$ 
and $|B|=m$, a local minimum under translation of the $L_1$-bidirectional or 
$L_\infty$-bidirectional RMS distance between $A$ and $B$, can be computed in 
time $O((n\log{m}+m\log{n})\log{\min{\{n,m\}}})$.
\end{theorem}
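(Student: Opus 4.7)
The starting point is to write both directional terms as sums of translated squared step-functions. Just as we set $N_{A,i}(t)=N_A(b_i+t)$, we can define, for each $a_j\in A$, the step function $M_{B,j}(t)=N_B(a_j-t)$, which is the nearest neighbor of $a_j$ in $B+t$ shifted by $-t$. Then
\[
\h(A,B+t)=\sum_{j=1}^n\bigl(a_j-t-M_{B,j}(t)\bigr)^2,
\]
which is symmetric to the expression for $\h(B+t,A)$ but with the roles of $n$ and $m$ swapped: $n$ step functions with $O(m)$ breakpoints each, versus $m$ step functions with $O(n)$ breakpoints each. So both $\h_{T,1}$ and $\h_{T,\infty}$ are determined by $O(nm)$ breakpoints in total, and between consecutive breakpoints each unidirectional term is a parabolic arc.

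For the $L_1$-case, the plan is to replay the weighted-median shrinking from the unidirectional proof on the combined collection of $n+m$ step functions. The same critical observation still holds: at any breakpoint of any $N_{A,i}$ or $M_{B,j}$, the squared-distance term swaps between two symmetric values around zero, so its derivative jumps strictly upward; hence the derivative of $\h_1(B+t,\cdot)$ also has only upward jumps, and any local minimum lies in the interior of a parabolic piece. I would initialize an interval $I$ with the property that the left-sided derivative is negative on the left endpoint and positive on the right endpoint (finite endpoints exist because $\h_1 \to +\infty$ outside a bounded range). Each iteration computes the median step of each of the $n+m$ step functions currently alive in $I$ (cost $O(n\log m + m\log n)$), selects their weighted median $\xi^*$ (weighting each step function by its current step count), evaluates the one-sided derivatives of $\h_1$ at $\xi^*$ in the same asymptotic time, and shrinks $I$ by replacing one endpoint with $\xi^*$. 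A constant fraction of the $O(nm)$ breakpoints is eliminated per iteration, so $O(\log\min\{n,m\})$ iterations suffice to reduce the count to $O(n+m)$, after which the explicit piecewise-parabolic representation can be enumerated and minimized in linear time.

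For the $L_\infty$-case, the only genuinely new ingredient is the family of intersection points of $\h(A,B+t)$ and $\h(B+t,A)$. Within any maximal interval where \emph{both} functions are a single parabolic arc, there are at most two such intersections, and the number of such intervals is $O(nm)$ (the common refinement of the two breakpoint sets); thus these crossings contribute only $O(nm)$ additional breakpoints, of the same order as the standard ones. Treat them as virtual breakpoints alongside the step-function transitions: they are generated on the fly when, for each step function, we know the currently active nearest-neighbor assignment and can write down the relevant parabolic pieces of $f=\h(A,B+t)$ and $g=\h(B+t,A)$. At each evaluation point we compute both $f, g$ and their one-sided slopes, and decide whether a local minimum of $\max\{f,g\}$ is crossed, either interior to a parabolic piece of the currently active function (slope changes sign) or at a crossing $f=g$ where the slope changes from negative to non-negative. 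The weighted-median shrinking argument carries over verbatim.

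The main obstacle to watch out for is the $L_\infty$ bookkeeping: one must ensure that the $O(nm)$ additional crossing breakpoints are enumerated only implicitly, so that a single shrinking iteration still costs only $O(n\log m + m\log n)$. This is handled by never materializing the crossings outside the currently surviving interval, so that each surviving step function contributes its (at most two) parabola-parabola intersections to the weighted median computation in $O(1)$ extra time per step function. Combining the per-iteration bound with the $O(\log\min\{n,m\})$ iteration count yields the claimed running time $O\bigl((n\log m+m\log n)\log\min\{n,m\}\bigr)$.
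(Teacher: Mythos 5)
Your proposal follows essentially the same route as the paper, which itself only sketches this theorem as a ``simple extension'' of the unidirectional weighted-median procedure: symmetrize by adding the $n$ step functions $t\mapsto N_B(a_j-t)$, run the weighted-median shrinking on the combined family of $n+m$ lists for $O(\log\min\{n,m\})$ iterations at cost $O(n\log m+m\log n)$ each, and for $\h_\infty$ treat the $O(nm)$ crossings of the two unidirectional functions as additional (implicitly generated) breakpoints. One sign slip to fix: at a breakpoint of $N_{A,i}$ (or of $N_B(a_j-t)$) the term $b_i+t-N_{A,i}(t)$ flips from $+\delta/2$ to $-\delta/2$, so the derivative of each squared term, and hence of $\h_1$, jumps strictly \emph{downward}, not upward --- and it is precisely the downward direction that rules out a local minimum at a step breakpoint (an upward jump from negative to positive would create one), so your conclusion is right but your stated justification is inverted.
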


\subsection{\bf Minimum Hausdorff RMS distance under translation in two dimensions}
Again, we focus on the unidirectional case, but the analysis and results
extend in a straightforward manner to the bidirectional variants. Recall that,
for two sets $A=\{a_1,\ldots,a_n\}$ and $B=\{b_1,\ldots,b_m\}$ in $\R^2$, the 
\emph{minimum unidirectional Hausdorff RMS distance under translation} from 
$B$ to $A$ is
\[
\h_T(B,A) = \min_{t \in \R^2}\h(B+t,A) =\min_{t\in \R^2} \sum_{i=1}^m \left\| b_i + t - N_{A,i}(t)\right\|^{2}.
\]

As before, we put $r(t)=\h(B+t,A)$, for $t \in{\R^2}$, and we seek a translation 
$t^*\in\R^2$ that brings $r$ to a local minimum. 
Here, one can also compute a local minimum by applying the two-dimensional version 
of the ICP algorithm, but in the worst case it might perform $O(n^2m^2)$ 
iterations, each taking $O(m\log{n})$ time \cite{ICP2006}. 
Moreover, one can calculate the \emph{global} minimum of $r(t)$ in 
$O(n^2m^2\log{(nm)})$ time, as follows.

Let $\VD(A)$ denote the Voronoi diagram of $A$, and let $M$ denote the overlay 
subdivision of the $m$ shifted Voronoi diagrams, $\VD(A-b_i)$, for $b_i\in B$, which are just copies of 
$\VD(A)$, shifted by the corresponding $b_i\in{B}$. $M$ has $O(n^2m^2)$ 
regions, and this bound is tight in the worst case \cite{ICP2006}. $M$ can be constructed 
in $O(n^2m^2\log{(nm)})$ time, using, e.g., a standard line-sweep technique.

Within each region $\tau$ of $M$, the nearest-neighbor assignments $N_{A,i}(t)$, 
for $i=1,\ldots,m$, are fixed for all $t \in{\tau}$. Hence, the graph of $r(t)$ 
over $\tau$ is a portion of a single paraboloid of the form 
$r(t)=m\| t \|^2+ \scalprod{\dd_\tau }{t} + c_\tau$, for a suitable vector $\dd_\tau$ and scalar
$c_\tau$. This allows us (when $c_\tau$ and $\dd_\tau$ are available) to find a local 
minimum of $r(t)$ within the interior of $\tau$ (if one exists) in constant time. 
One also needs to test for a local minimum over each edge and vertex of $M$, and 
this too can be done in constant time for each such feature, provided that the 
explicit expression for $r(t)$ over that feature is known. This expression can be 
updated in constant time as we move from one feature of $M$ to a neighboring feature. 
All this leads to the promised computation of the global minimum of $r(t)$ in 
$O(n^2m^2\log{(nm)})$ time, and our goal is to find a local minimum
much faster.

\paragraph{The high-level algorithm for finding a local minimum.}
We now present an improved algorithm that runs in time $O(nm \log^2{(nm)})$.
Similar to the one-dimensional case (and to the case of partial matching), we search for a local minimum of $r(t)$ within
a vertical slab~$I$, that we keep shrinking until it contains no vertex of $M$
in its interior. This final slab crosses $M$ in a sequence of only $O(nm)$ regions,
stacked above one another and separated by (portions of) edges of $M$. It is then
routine to scan these regions, construct the explicit expression for $r(t)$ over each region,
updating these expressions in constant time as we go from one region to an adjacent one,
and searching for the global minimum within $I$, all in $O(nm)$ time.

A main component in our approach is a procedure that decides whether $r(t)$ has a 
local minimum to the left or to the right of a vertical line $\lambda$. 
In analogy to the procedure in Section~\ref{sec:pmalg}, 
we call this procedure $\Gamma_1(\lambda)$. This decision 
can be made in $O(nm\log(nm))$ time, as follows. We first calculate the 
\emph{global} minimum of $r(t)$ restricted to $\lambda$, by intersecting $\lambda$
with the $O(nm)$ edges of the shifted Voronoi diagrams $\VD(A-b_i)$, by sorting these
intersections along $\lambda$, and by constructing the explicit expressions for $r(t)$
over each interval between two consecutive intersections, updating these
expressions in $O(1)$ time as we cross from one interval to an adjacent one. 
Having found the global minimum $\bar{t}$ along 
$\lambda$, we then inspect the sign of $\frac{\bd r}{\bd x}$ at $\bar{t}$, and go in 
the direction where $r$ is locally smaller than $r(\bar{t})$. 
(If the derivative is $0$, we have found a local minimum at $\bar{t}$ and we terminate the entire algorithm.)
All this takes $O(nm\log(nm))$ time. 
As mentioned in Section~\ref{sec:pmalg}, we should use a \emph{global} minimum of $r(t)$ restricted to $\lambda$, for guaranteeing that a local minimum indeed lies on the side of $\lambda$ that we continue with.  

We use the decision procedure $\Gamma_1$ for shrinking the slab $I$, while ensuring that 
it continues to contain a local minimum. The shrinking is done in two stages. 
The first stage narrows $I$ until it has no \emph{original} vertices of the shifted 
Voronoi diagrams inside it. The second stage narrows the slab further to a slab 
that has no vertices of $M$ (i.e., intersection points of Voronoi edges of different 
diagrams) in it. 

\paragraph{Pruning the original Voronoi vertices.}
The overlay $M$ contains $s=O(nm)$ original Voronoi vertices. 
We sort these vertices into a list $L=(v_1,\ldots,v_{s})$, by their $x$-coordinates. 

Let $\lambda_i$ be the $y$-parallel line through $v_i$, for $i=1,\ldots,s$. 
The slab that we start with is $I=[\lambda_1,\lambda_s]$.  
We run $\Gamma_1(\lambda_1)$
and $\Gamma_1(\lambda_s)$, computing the global minima on $\lambda_1$ and on $\lambda_s$, 
and inspecting the signs of $\frac{\partial r}{\partial x}$ at these minima. If the 
output points to a local minimum outside $I$,
then one of the semi-$x$-unbounded side slabs to the left or to the right of $I$
contains a local minimum and has no original Voronoi vertex in its interior, and we stop
the first stage with that slab. Otherwise, we perform a binary search over $L$,
where at each step of the search, at some vertex $v_i$, we run the decision procedure 
$\Gamma_1(\lambda_i)$ and determine which of the two sub-slabs that are split from the current 
slab by $\lambda_i$ contains a local minimum, using the rule stated above.
This stage takes $O(nm\log^2(nm))$ time.

\paragraph{Pruning the remaining vertices.}
Let $I$ denote the final slab of the previous stage. Since $I$ does not contain any 
original Voronoi vertices, every edge of any Voronoi diagram that meets $I$ crosses 
it from side to side, so its intersection with $I$ coincides with the intersection of 
the line supporting the edge. Let $S$ denote the set of these lines.

The number of intersections between the lines of $S$ within $I$ can still be large
(but at most $O(n^2m^2)$). We run a binary search over these intersections, 
to shrink $I$ further to a slab between two consecutive intersections (that contains 
a local minimum). To guide the binary search, we use the classical \emph{slope-selection}
procedure \cite{csss} that can compute, for a given slab $I$ and a given parameter $k$, 
the $k$-th leftmost intersection point of the lines in $S$ within $I$, in $O(nm\log(nm))$ time.
With this procedure at hand, the binary search performs $O(\log{(nm)})$ steps, 
each taking $O(nm\log{(nm)})$ time, both for finding the relevant intersection 
point, and for running $\Gamma_1$ at the corresponding vertical line. 
Thus, this stage takes (also) $O(nm\log^2{(nm)})$ time.

\paragraph{Computing a local minimum in the final slab.}
Once there are no vertices of $M$ within $I$, $I$ is crossed by at most $O(nm)$ 
edges of $M$, each crossing $I$ from its left line to its right line. Consequently, these 
edges partition $I$ into $O(nm)$ trapezoidal or triangular slices, each being 
a portion of a single region of $M$, and is bounded by the left and right bounding lines 
of $I$, and by two consecutive edges of $M$ (in their $y$-order).

We compute $r(t)$ in, say, the top slice, and its minimum in that slice. Then we traverse 
the slices from top to bottom, and update $r(t)$ for every slice that we encounter in 
constant time. 
In each slice, we check whether $r(t)$ has a local minimum inside the slice. 
Since, by construction, the slab must contain a local minimum, we will find it.

The running time of this final computation is comprised of computing $r(t)$ once, in 
$O(m\log{n})$ time, and afterwards updating it, in constant time, 
$O(nm)$ times, and computing its minimum in $I$. Therefore, this stage 
takes a total of $O(nm)$ time.

Thus, with all these components, we get the main result of this section:

\begin{theorem}\label{th2dim1dir}
Given two finite point sets $A$, $B$ in $\R^2$, with $|A|=n$ and $|B|=m$, a 
local minimum of the unidirectional Hausdorff RMS distance from $B$ to $A$ under 
translation can be computed in time $O(nm\log^2(nm))$.
\end{theorem}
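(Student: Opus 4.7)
The plan is to locate a local minimum of $r(t)=\h(B+t,A)$ by maintaining a vertical slab $I$ known to contain one, and iteratively shrinking $I$ until its interior is free of vertices of the overlay subdivision $M$ of the $m$ shifted diagrams $\VD(A-b_i)$. Within every region $\tau$ of $M$ the nearest-neighbor assignments $N_{A,i}(t)$ are fixed, so $r$ agrees with a single paraboloid $m\Vert t\Vert^{2}+\scalprod{\dd_\tau}{t}+c_\tau$ whose coefficients can be updated in $O(1)$ time upon crossing a feature of $M$. This is the structural fact that makes local inspection of $r$ cheap, provided the combinatorics of $M$ inside $I$ has been sufficiently tamed.

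The central subroutine is a decision procedure $\Gamma_1(\lambda)$ on a vertical line $\lambda$. The procedure sorts the $O(nm)$ intersection points of $\lambda$ with the edges of the diagrams $\VD(A-b_i)$, scans the resulting intervals while maintaining the one-dimensional restriction of $r$ to $\lambda$ as a sequence of parabolic arcs (updating in constant time at each breakpoint), and computes the \emph{global} minimum $\bar{t}$ of $r|_\lambda$. Inspecting the sign of $\partial r/\partial x$ at $\bar{t}$ certifies either that $\bar{t}$ is itself a local minimum of $r$ in $\R^2$ (in which case the algorithm halts) or that some local minimum of $r$ lies strictly on one specific side of $\lambda$. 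The reasoning mirrors the analogous step in Section~\ref{sec:pmalg}: using a \emph{global} line-minimum is essential, since the bracketing argument fails for merely local line-minima. One call to $\Gamma_1$ costs $O(nm\log(nm))$.

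The shrinking is done in two stages. In the first stage, I sort the $O(nm)$ original Voronoi vertices of the diagrams $\VD(A-b_i)$ by $x$-coordinate and run a binary search over this list, invoking $\Gamma_1$ at the vertical line through the current candidate vertex and discarding, according to its answer, the half-slab that does not contain a local minimum. After $O(\log(nm))$ calls the resulting slab contains no original vertex in its interior, so every Voronoi edge that meets it crosses from side to side and may be replaced by its supporting line. In the second stage I need to remove the remaining overlay vertices of $M$, which are pairwise intersections of those lines. Here I couple a binary search with the slope-selection procedure of~\cite{csss}, which retrieves the $k$-th leftmost intersection inside the current slab in $O(nm\log(nm))$ time; $O(\log(nm))$ iterations, each comprising one slope-selection query and one call to $\Gamma_1$, suffice to shrink $I$ to a sub-slab $I^{*}$ free of overlay vertices. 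Both stages thus cost $O(nm\log^{2}(nm))$.

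Once $I^{*}$ contains no vertex of $M$, it is crossed from side to side by $O(nm)$ Voronoi edges that partition it into $O(nm)$ trapezoidal or triangular slices, each contained in a single region of $M$. I scan the slices in $y$-order, maintain the paraboloid coefficients $\dd_\tau,c_\tau$ in $O(1)$ time per transition, and test each slice interior (and the bounding edges and vertices) for a critical point of the associated paraboloid; the invariant preserved by both stages guarantees that at least one such critical point inside $I^{*}$ is a local minimum of $r$. This last sweep costs $O(nm)$, and summing the three phases yields the claimed $O(nm\log^{2}(nm))$ bound. The main obstacle I anticipate is the second stage: the slope-selection-based binary search and the $\Gamma_1$-based bracketing invariant have to be interleaved carefully, because the bracketing uses sign conditions on $\partial r/\partial x$ at the bounding lines of the slab, and these conditions are only reliable when the line restrictions are evaluated at their \emph{global} minima, exactly as in the partial-matching algorithm of Section~\ref{sec:pmalg}.
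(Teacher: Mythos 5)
Your proposal is correct and follows essentially the same route as the paper's proof: the same decision procedure $\Gamma_1(\lambda)$ based on the global minimum of $r$ restricted to a vertical line, the same two-stage slab shrinking (binary search over the original Voronoi vertices, then slope-selection over the remaining overlay intersections), and the same final $O(nm)$ sweep through the vertex-free slab. The cost accounting matches the paper's as well, so there is nothing substantive to add.
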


The bidirectional variants can be handled in much the same way, and, omitting the details, we get:

\begin{theorem}\label{th2dim2dir}
Given two finite point sets $A,B$ in $\R^2$, with $|A|=n$ and ${|B|=m}$,
a local minimum of the $L_1$-bidirectional or the $L_\infty$-bidirectional Hausdorff 
RMS distance between $A$ and $B$ under translation can be computed in time $O(nm\log^2(nm))$.
\end{theorem}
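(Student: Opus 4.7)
The plan is to adapt the algorithm behind Theorem~\ref{th2dim1dir} by using the correct overlay subdivision for the bidirectional setting. Since $\h(A,B+t)=\sum_{j=1}^n\|a_j-N_{B+t}(a_j)\|^2$ and $N_{B+t}(a_j)=N_B(a_j-t)+t$, the assignment $a_j\mapsto N_{B+t}(a_j)$ is constant on each face of the reflected-and-shifted Voronoi diagram $a_j-\VD(B)$. I therefore take $M$ to be the overlay of the $m$ shifted copies $\VD(A)-b_i$ (which govern $\h(B+t,A)$, as before) together with the $n$ reflected-and-shifted copies $a_j-\VD(B)$ (which govern $\h(A,B+t)$). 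This amounts to $O(nm)$ Voronoi edges and $O(nm)$ ``original'' Voronoi vertices in total, with $O(n^2m^2)$ pairwise-intersection vertices in the worst case---exactly the same orders as in the unidirectional case.

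Within any face $\tau$ of $M$, a direct expansion yields
$\h(B+t,A)=m\|t\|^2+\scalprod{\dd_\tau^{(1)}}{t}+c_\tau^{(1)}$ and
$\h(A,B+t)=n\|t\|^2+\scalprod{\dd_\tau^{(2)}}{t}+c_\tau^{(2)}$,
where the coefficient vectors and constants depend only on the combinatorial type of $\tau$. For the $L_1$-variant, $r(t):=\h_1(B+t,A)$ is the single paraboloid $(n+m)\|t\|^2+\scalprod{\dd_\tau^{(1)}+\dd_\tau^{(2)}}{t}+(c_\tau^{(1)}+c_\tau^{(2)})$ per face, and the two-stage slab-shrinking scheme transfers verbatim. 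The decision procedure $\Gamma_1(\lambda)$ still intersects $\lambda$ with the $O(nm)$ Voronoi edges, builds the piecewise-parabolic restriction of $r$ to $\lambda$, and computes its global minimum $\bar t$ together with the sign of $\partial r/\partial x(\bar t)$ in $O(nm\log(nm))$ time. A binary search through the $O(nm)$ original Voronoi vertices, followed by a slope-selection binary search through the intersection points of the $O(nm)$ supporting lines of the edges crossing the slab, shrinks $I$ to a slab containing no vertex of $M$; sweeping $r(t)$ through the $O(nm)$ resulting slices takes $O(nm)$ further time, yielding the claimed $O(nm\log^2(nm))$ bound.

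For the $L_\infty$-variant, $r(t)=\max\{\h(A,B+t),\h(B+t,A)\}$ is, inside each face $\tau$, the pointwise maximum of two paraboloids whose difference $(n-m)\|t\|^2+\scalprod{\dd_\tau^{(2)}-\dd_\tau^{(1)}}{t}+(c_\tau^{(2)}-c_\tau^{(1)})$ is a conic (a line if $n=m$, a circle otherwise). This ``equality curve'' is an additional breakpoint locus that $\Gamma_1$ must account for, but each such conic meets any vertical line $\lambda$ in at most two points, contributing only $O(1)$ extra breakpoints per face and $O(nm)$ in total along $\lambda$; they can be merged with the Voronoi-edge breakpoints in $\Gamma_1$ without altering its asymptotic cost. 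Likewise, within each final slice a candidate local minimum is either the apex of the locally dominant paraboloid or a critical point along the equality conic, both identifiable in constant time.

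The main---though modest---obstacle is the $L_\infty$ bookkeeping: one has to verify that the extra conic arcs do not inflate the breakpoint count along any vertical line, that the slope-selection pruning stage still only needs to eliminate intersections among the $O(nm)$ Voronoi-edge supporting lines (equality conics live inside faces of $M$ and do not create ``hidden'' $M$-vertices), and that a local minimum lying on the boundary between the two dominance regions inside some slice is detected by the per-slice constant-time test. Once these routine checks are in place, the argument proceeds exactly as for $L_1$, and both variants fit into the $O(nm\log^2(nm))$ bound announced in Theorem~\ref{th2dim2dir}.
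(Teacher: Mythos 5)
Your proposal is correct and follows essentially the same route the paper intends: the paper's own ``proof'' of Theorem~\ref{th2dim2dir} is a one-line remark that the bidirectional variants are handled ``in much the same way'' as Theorem~\ref{th2dim1dir}, and your write-up supplies exactly the details that remark omits --- the enlarged overlay of the $m$ shifted copies of $\VD(A)$ and the $n$ reflected-shifted copies of $\VD(B)$ (still $O(nm)$ edges and $O(n^2m^2)$ overlay vertices), the per-face paraboloid for $\h_1$, and the extra conic equality-loci for $\h_\infty$, mirroring the paper's analogous observation in the one-dimensional bidirectional case. No discrepancy with the paper's approach and no gap.
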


\noindent\textbf{Acknowledgments.}
Rinat Ben-Avraham was supported by Grant 2012/229 from the U.S.-Israel Binational Science Foundation. Matthias Henze was supported by ESF EUROCORES programme Euro\-GIGA-VORONOI, (DFG): RO 2338/5-1. Rafel Jaume was supported by ``Obra Social La Caixa'' and the DAAD. Bal\'azs Keszegh was supported by Hungarian National Science Fund (OTKA), under grant PD 108406, NN 102029 (EUROGIGA project GraDR 10-EuroGIGA-OP-003), NK 78439, by the J\'anos Bolyai Research Scholarship of the Hungarian Academy of Sciences and by the DAAD. Orit E.~Raz was supported by
Grant 892/13 from the Israel Science Foundation. Micha Sharir was supported by Grant 2012/229 from the U.S.-Israel Binational Science Foundation, by Grant 892/13 from the Israel Science Foundation, by the Israeli Centers for Research Excellence (I-CORE) program (center no.~4/11), and by the Hermann Minkowski--MINERVA Center for Geometry at Tel Aviv University. Igor Tubis was supported by the Deutsch Institute. 
A preliminary version of this paper has appeared in~\cite{ESA}.


\end{document}